\title{Identifiability of Large Phylogenetic Mixture Models}
\author{John A.~Rhodes}
\address{Department of Mathematics and Statistics \\
University of Alaska, Fairbanks AK 99775}
\email{j.rhodes@alaska.edu}
\author{Seth Sullivant}
\address{Department of Mathematics \\
North Carolina State University, Raleigh, NC 27695}
\email{smsulli2@ncsu.edu}
\date{}
\theoremstyle{plain}
\newtheorem{thm}{Theorem}[section]
\newtheorem{lemma}[thm]{Lemma}
\newtheorem{prop}[thm]{Proposition}
\theoremstyle{definition}
\newtheorem{defn}[thm]{Definition}
\theoremstyle{remark}
\newcommand{\rr}{\mathbb{R}}
\newcommand{\bfa}{\mathbf{a}}
\newcommand{\bfm}{\mathbf{m}}
\newcommand{\bfT}{\mathbf{T}}
\newcommand{\calm}{\mathcal{M}}
\renewcommand{\flat}{\operatorname{Flat}}
\newcommand{\fraks}{{\frak S}}
\newcommand{\rank}{\mathrm{rank}}
\newcommand{\nullity}{\operatorname{nullity}}
\newcommand{\im}{\operatorname{im}}
\newcommand{\adj}{\operatorname{adj}}
\newcommand{\inD}[1][\relax]{\def\argone{#1}\def\temprelax{\relax}
  \ifx\argone\temprelax\right.\else\,\middle|#1\right.{}\fi}
\begin{document}
\maketitle

\begin{abstract}
Phylogenetic mixture models are statistical models of character evolution allowing for heterogeneity. Each of the classes
in some unknown partition of the characters may evolve by different processes, or even along different trees. 
The fundamental question of whether parameters of such a model are identifiable is difficult to address, due to the complexity of the parameterization. We analyze mixture models on large trees, with many mixture components, showing that both numerical and tree parameters are indeed identifiable in these models when all trees are the same.  We also explore the extent to which our algebraic techniques can be employed to extend the result to mixtures on different trees. 
\end{abstract}

%%%%%%%%%%%%%%%%%%%%%%%%%%%%%%%%%%
%%%%%%%%%%%%%%%%%%%%%%%%%%%%%%%%%%
%%%%%%%%%%%%%%%%%%%%%%%%%%%%%%%%%%
%%%%%%%%%%%%%%%%%%%%%%%%%%%%%%%%%%

\section{Introduction}

A fundamental question about any parametric statistical model is whether or not the parameters of that model are \emph{identifiable}; that is, does a probability distribution arising from the model uniquely determine the parameters that produced it.  Establishing the identifiability of parameters is important for statistical inference, especially in models where the parameters have a physical or biological interpretation.  For example, it is well-known that identifiability is a necessary condition for statistical consistency of maximum likelihood estimation \cite[Chapter 16]{Felsenstein2004}.

In phylogenetics, parameters of interest include the discrete tree parameter and numerical parameters specifying substitution processes on the edges of the tree.  For the simplest phylogenetic models, identifiability of both tree and numerical parameters have long been established \cite{Chang}.  But as models grow in complexity, with both the combinatorial description of trees and the underlying number of numerical parameters increasing, the question of identifiability is far from settled. 

A particular class of complex phylogenetic models of growing interest and use are the phylogenetic mixture models.  Relatively simple examples are the models with small numbers of parameters --- including those with $\Gamma$-distributed rates, invariable sites, and combinations of these --- that are currently the most commonly used in data analysis. More elaborate mixtures allow across-site rate variation with more freedom in the distribution of the rate multipliers \cite{HuelSuch}, the use of different rate matrices \cite{PagelMeade}, or even
multiple distinct trees each with their own rate and time parameters. Such models may have a large number of mixture components. For instance, 
a Bayesian nonparametric analysis conducted in \cite{HuelSuch} allowed a variable number of components, with a Dirichlet process prior specifying a mean of as many as 20.

However, only the simplest phylogenetic mixture models have been proven to be identifiable, typically where the number of parameters are small.   The papers \cite{AllmanAneRhodes07,APRS,Allman2006,Allman2008,Chai2010,StefVig2007}  contain previous results on identifiability of such models, of various sorts.  Note that only recently has it been shown that most choices of parameters of the widely-used GTR + I + $\Gamma$ model are identifiable \cite{Chai2010}, although for a certain type of rate matrix the question remains open.

\smallskip

Our goal in this paper is to develop methods to prove identifiability in phylogenetic models that are considerably more complex than in previous work.  In particular, we investigate the identifiability of phylogenetic models with many mixing components.  A consequence of our methods is the following theorem:

\begin{thm}\label{thm:intro} For an $r$-component identical tree mixture of the general Markov model of character evolution with $\kappa$-state random variables on an $n$-leaf binary phylogenetic tree, both the tree parameter and the numerical parameters are generically identifiable if $r  < \kappa^{\lceil n/4 \rceil - 1} $.  \end{thm}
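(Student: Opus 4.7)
The plan is to address the theorem in two stages --- first identify the discrete tree parameter, then the numerical parameters --- with flattening ranks and tensor-decomposition uniqueness as the primary tools. For a bipartition $A|B$ of the leaves, let $\flat_{A|B}(P)$ denote the $\kappa^{|A|} \times \kappa^{|B|}$ matrix of joint probabilities $P(X_A, X_B)$. The standard edge-flattening lemma for the general Markov model states that if $A|B$ is induced by an edge of the tree $T$, then $\rank \flat_{A|B}(P) \leq \kappa$ for any single-component distribution $P$; consequently, for an $r$-component mixture on a common tree $T$, any split of $T$ yields a flattening of rank at most $r\kappa$.

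To identify the tree, I would pair this bound with the hypothesis $r < \kappa^{\lceil n/4 \rceil - 1}$, which gives $r\kappa < \kappa^{\lceil n/4 \rceil}$. Thus for every split of $T$ with both sides of size at least $\lceil n/4 \rceil$, the flattening is strictly rank-deficient in its $\kappa^{\lceil n/4 \rceil} \times \kappa^{n - \lceil n/4 \rceil}$ ambient matrix space. The crucial converse to establish is that for generic numerical parameters, every bipartition $A|B$ with both sides of size at least $\lceil n/4 \rceil$ that is \emph{not} a split of $T$ has mixture flattening of rank strictly exceeding $r\kappa$. The intuition is that an incompatible split must cut at least two edges of $T$, so the generic rank of a single-component flattening is at least $\kappa^2$, and a sum of $r$ such matrices attains generic rank $\min(\kappa^{\lceil n/4 \rceil}, r\kappa^2) > r\kappa$. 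The set of balanced splits of $T$ is then readable directly from $P_{\mathrm{mix}}$, and pairwise compatibility together with a recursive argument on subtrees (obtained by marginalizing out subsets of leaves) recovers the full tree.

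For the numerical parameters given $T$, I would choose a tripartition $L_1 \sqcup L_2 \sqcup L_3$ of the leaves into parts corresponding to subtrees of $T$ with each $|L_i| \geq \lceil n/4 \rceil$, and view $P_{\mathrm{mix}}$ as a three-way tensor. Each mixture component $P_i$ admits a bounded-rank three-way decomposition determined by the cut edges, so $P_{\mathrm{mix}}$ decomposes as a sum of rank-one three-way tensors whose total number is $r$ times a small tree-structure constant. The hypothesis $r < \kappa^{\lceil n/4 \rceil - 1}$ is calibrated so that the factor matrices have sufficiently large generic Kruskal ranks for Kruskal's uniqueness theorem to apply. Unique recovery of the rank-one summands yields the mixture-component marginals on these subtrees, and iterating across enough tripartitions, combined with the single-tree identifiability results already in the literature, recovers all edge parameters of every component.

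The main obstacle I anticipate is the generic rank lower bound for incompatible flattenings: showing that sums of $r$ generic matrices of a prescribed bounded rank attain a specific larger rank on a Zariski-dense parameter set is an algebro-geometric question about secant varieties of Segre-like varieties, which I would approach by constructing an explicit parameter specialization achieving the target rank and then invoking semicontinuity. A secondary combinatorial hurdle is reconstructing $T$ from only its balanced splits, which I would handle inductively by applying the same identifiability argument to pruned subtrees once enough top-level splits have been determined.
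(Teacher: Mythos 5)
Your proposal contains the right raw materials (flattening-rank tests for splits, Kruskal's theorem for a three-way flattening), but it is missing the idea that the paper's proof actually hinges on: after Kruskal's theorem is applied to $\flat_{A|B|C}(P)=[M_A,M_B,M_C]$, the factor matrices are recovered only up to a \emph{simultaneous permutation and scaling of all $r\kappa$ rows}. Each row corresponds to a (mixture component, hidden state) pair, and the unknown permutation freely interleaves rows belonging to different components. So your assertion that ``unique recovery of the rank-one summands yields the mixture-component marginals'' is precisely where the argument breaks: nothing in the rank/Kruskal framework tells you which $\kappa$ of the $r\kappa$ recovered rows belong to the same component, and without that grouping you cannot reassemble any $P_i$ or its parameters. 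The paper devotes a separate tool to this --- the No Shuffling Lemma --- which certifies via \emph{non-rank} phylogenetic invariants (Strassen's commutation relations $P_{(i)}(\adj P_{(j)})P_{(k)}=P_{(k)}(\adj P_{(j)})P_{(i)}$ for $\kappa\geq 3$, and $5$-minors of edge flattenings on $4$ leaves when $\kappa=2$) that any $\kappa$ rows drawn from more than one component fail the invariants of every single-tree model. This combination of Kruskal's theorem with invariants is the paper's essential innovation, and your plan has no substitute for it.

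Two further steps would fail as written. First, a tripartition with all three parts of size $\geq\lceil n/4\rceil$ need not exist: in a caterpillar tree every internal vertex has a part of size $1$ or $2$. The paper's centroid lemma guarantees only \emph{two} parts of size $\geq\lceil n/4\rceil$, and this suffices because Kruskal's condition $\min(r\kappa,\kappa^{\#A})+\min(r\kappa,\kappa^{\#B})+\min(r\kappa,\kappa^{\#C})\geq 2r\kappa+2$ already holds with $\#C=1$, the third summand contributing $\kappa\geq 2$. Second, your recursive reconstruction of $T$ from balanced splits cannot terminate: a rank test detects a split $A|B$ only when $\kappa^{\#B}>r\kappa$, and marginalizing to a subtree shrinks the leaf set while $r$ stays fixed, so for $r$ near $\kappa^{\lceil n/4\rceil-1}$ the shallow structure (cherries, etc.) is invisible to flattening ranks at every level of the recursion. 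The paper sidesteps both issues: rank tests are used only to locate one deep pair of splits; then the individual weighted components $\pi_i P_i$ are extracted (Kruskal, No Shuffling, and a left-inverse disentanglement step for the part of the tree beyond the common substructure), and the full tree together with all numerical parameters is read off each $P_i$ by Chang's single-tree identifiability theorem. One point in your favor: your fallback for the incompatible-split rank bound --- an explicit parameter specialization plus semicontinuity --- is exactly the paper's method, though note it does not establish your stronger claim of generic rank $\min(\kappa^{\#A},r\kappa^2)$; the paper's specialization attains only $\min((r-1)\kappa+\kappa^2,\kappa^{\#B})$, which already exceeds $r\kappa$ and is all that is needed.
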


By an \emph{identical tree mixture model} we mean a mixture of probability distributions coming from the same topological phylogenetic tree.  More complicated mixture models might have each distribution arising from a different topological tree.
Theorem \ref{thm:intro} improves substantially over past identifiability results on identical tree phylogenetic mixture models.  Previously, it was only known that the tree parameter is identifiable, and then only in the case that $r < \kappa$ (work of Allman and the first author \cite{Allman2006}).  This new theorem quantifies the intuition that larger taxon sets should allow for identifiability of more complex models, and is an exponential improvement over previous results.

\smallskip

Our strategy of proof is to combine two techniques coming from the algebraic study of phylogenetic models.  First, we use the representation of probability distributions in a phylogenetic model as tensors with small tensor rank and employ a theorem of J. Kruskal to uniquely identify components of that tensor.  Second, we use phylogenetic invariants as tools to identify deeply embedded features of phylogenetic trees, and to ``untangle'' probability distributions that have been shuffled together by the tensor analysis.  While each technique by itself is only able to make a small advance on the identifiability problem, when combined they give dramatically stronger results.  Background on these general techniques appears in Section \ref{sec:tensors}, and the proofs of the main theorems are in Section \ref{sec:commonsubstructure}.

Our techniques actually extend to mixtures from different trees provided they all share a certain type of  common substructure.  It is in this generality that we prove our main results, Theorems \ref{thm:numparam} and \ref{thm:treeparam},  with Theorem \ref{thm:intro} arising as  a corollary.

The assumption of any common substructure in the trees is of course false in some biological situations modeled by mixtures. For instance, if the mixture is due to the coalescent process modeling incomplete lineage sorting on a species tree of populations,  then components will be present from all topological gene trees \cite{Degnan2005,Wakeley}.  However, one might also model lateral gene transfer at a number of (unknown) locations in a tree as a mixture, and for this the assumption of common substructure could be quite plausible.

%%%%%%%%%%%%%%%%%%%%%%%%%%%%%%%%%%
%%%%%%%%%%%%%%%%%%%%%%%%%%%%%%%%%%
%%%%%%%%%%%%%%%%%%%%%%%%%%%%%%%%%%
%%%%%%%%%%%%%%%%%%%%%%%%%%%%%%%%%%

\section{Preliminaries} \label{sec:prelim}

\subsection{Mixture models}

Consider the general Markov model of $\kappa$-state character evolution, $GM(\kappa)$, on $n$-taxon trees ({\it e.g.,} $\kappa = 4$ corresponding to DNA sequences).  We assume the taxa labeling the leaves are identified with $[n] = \{1,2,\ldots, n\}$.  Then for each rooted leaf-labeled tree $T$, there is a \emph{parametrization map} $\psi_T$ giving the joint distribution of states at the leaves of the tree $T$ as functions of continuous parameters, which specify the state distribution at the root and the transition probabilities on the edges. 
Let $S_T$ denote the continuous parameter space of $GM(\kappa)$ on $T$, which is a full dimensional subset of some $\rr^m$.  Then
$$
\psi_T : S_T \rightarrow \Delta^{\kappa^n -1},
$$
where $\Delta^{\ell-1} \subseteq [0,1]^{\ell}$ denotes the probability simplex comprised of non-negative real vectors summing to $1$.  The image of this map is the phylogenetic model $\calm_T \subseteq \Delta^{\kappa^n -1}$.

The associated $r$-component mixture model has the following parametrization:  For every $r$-tuple of  trees $\bfT = (T_1, T_2, \ldots, T_r )$ on the same taxa $[n]$, let $S_\bfT = S_{T_1} \times \cdots \times S_{T_r} \times \Delta^{r-1}$ and let
$$
\psi_\bfT : S_\bfT \rightarrow \Delta^{\kappa^n -1},
$$
be defined by
$$
\psi_\bfT(s_1, \ldots, s_r, \pi)  = \pi_1 \psi_{T_1}(s_1) + \cdots + \pi_r \psi_{T_r}(s_r).
$$
Thus $\pi$ is the vector of mixing parameters; each $\pi_i$ gives the proportion of i.i.d.~sites that evolve along tree $T_i$ with parameter vector $s_i$.  The $r$-component mixture model on $\bfT$ is the image of the map $\psi_\bfT$, and is denoted $$\calm_{\bfT}=\calm_{T_1} \ast \calm_{T_2} \ast \cdots \ast \calm_{T_r}.$$  
Clearly $\calm_{\bfT}$ depends only on the unordered multiset of the trees in $\bfT$.
In the case where  $T_i=T$ for all $i$, we call this an $r$-component \emph{identical tree} mixture model on $T$.

\smallskip

We focus on the mixture models built from the basic model $GM(\kappa)$ in this paper, as these are quite general \emph{algebraic models}, for which the maps $\psi_T$ are naturally defined by polynomial formulas.  Many models which are not polynomial (in particular, those built from the general time-reversible model) can be embedded in them.  The polynomial structure of algebraic models allows them to be studied using techniques from algebraic geometry.

\subsection{Identifiability of parameters}

For algebraic models, it is convenient to slightly weaken
the notion of identifiability of parameters to \emph{generic identifiability}. The
word ``generic'' is used to mean ``except on a proper algebraic
subvariety'' of the parameter space. (See section \ref{subsec:inv} for a formal definition of variety.) Although it is sometimes
possible to be explicit about this subvariety, we usually are
not, since the key point in interpretation is that a proper subvariety
is a closed set of Lebesgue measure 0 inside the larger set.
Thus regardless of the precise subvariety involved, ÔrandomlyÕ
chosen points are generic with probability 1.

On an unmixed $GM(\kappa)$ model on a single tree $T$, there are several well-understood issues with identifiability of parameters. 
First, at any internal node of the tree, in a phenomenon called \emph{label swapping}, one may permute the names of the state space of the corresponding hidden variable (permuting the columns or rows of the Markov matrices on edges  leading to or from the node) with no effect on the probability distribution. Second, while the standard parameterization of the $GM(\kappa)$ model on a tree $T$ requires specification of the root of $T$, for generic choices of parameters one can relocate the root (with an appropriate uniquely determined change to the parameters, up to label swapping) with no effect on the probability distribution.  Third, if any internal nodes of $T$ have degree 2, they may be suppressed and the Markov matrices on incident edges combined, with no effect on the probability distribution. Thus one generally assumes trees have no such nodes. For simplicity,  we do not always explicitly refer to these issues in our formal statements in this article. However, we will occasionally use the second fact to choose a convenient location for a root of a tree in our arguments.

That these are the only issues for parameter identifiability for the unmixed model is the content of the following theorem, which was essentially shown in \cite{Chang}. 
\begin{thm} \label{thm:basictree} For the $GM(\kappa)$ model on a single tree,
\begin{enumerate}   
\item  The unrooted tree parameter is generically identifiable, in the class of binary trees.
\item  For a fixed binary tree $T$, the numerical parameters of the $GM(\kappa)$ model on $T$ are generically identifiable, up to label swapping at internal nodes of the tree, and an arbitrary choice of a node as the root.
\end{enumerate}

\end{thm}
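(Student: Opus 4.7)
The plan is to reduce the full-tree problem to a three-leaf (``tripod'') subproblem, which can be solved by a direct linear-algebra argument, and then bootstrap the identification outward through the tree; for tree identifiability I would use a rank test on tensor flattenings and the four-point/quartet combinatorics.

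For part (2), fix a binary tree $T$ and (using the freedom to re-root generically) place the root at an internal node $v$ with three subtrees; pick leaves $a,b,c$, one in each subtree. The joint distribution at the three leaves factors as
$$P_{abc}(i,j,k) \,=\, \sum_{s=1}^{\kappa} \pi_v(s)\, A(s,i)\, B(s,j)\, C(s,k),$$
where $A$, $B$, $C$ are the products of edge Markov matrices along the paths from $v$ to $a$, $b$, $c$, and $\pi_v$ is the state distribution at $v$. Generically the matrices $A,B,C$ are invertible (a polynomial non-vanishing condition), so the decomposition is unique up to a simultaneous relabeling of the hidden states at $v$. This uniqueness can be obtained by Chang's spectral argument: fix the index $k$ of leaf $c$ to obtain a pencil of $\kappa\times\kappa$ matrices, and the matrix $P_{abc}(\cdot,\cdot,k)\,P_{ab}(\cdot,\cdot)^{-1}$ is similar to a diagonal matrix whose eigenvectors recover $A$ up to column permutation and scaling. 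Equivalently one can invoke Kruskal uniqueness in the full-Kruskal-rank case. Once $\pi_v$, $A$, $B$, $C$ are recovered, move to a neighbor $v'$ of $v$ and repeat the tripod analysis using three leaves whose paths through $T$ make $v'$ the apex; comparing the two recovered path products across the edge $\{v,v'\}$ yields the Markov matrix on that edge by a single matrix inversion. Iterating over all internal edges recovers every edge parameter, up to label swapping at internal nodes.

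For part (1), I would identify the unrooted topology by a rank test on $2\!+\!2$ flattenings. For any four leaves $a,b,c,d$, the $\kappa^2\times\kappa^2$ flattening $P_{ab\mid cd}$ of their joint distribution factors through the hidden state along the internal edge of the quartet, so $\rank(P_{ab\mid cd})\le \kappa$ when the split $ab\mid cd$ is induced by $T$, while for each of the other two quartet topologies this rank is generically strictly greater than $\kappa$. Applying this rank test to every four-leaf subset recovers every quartet displayed by $T$, and by Buneman's theorem the set of displayed quartets determines the binary tree uniquely.

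The main obstacle is the verification of the genericity conditions used at each step: that the path-product matrices $A,B,C$ are invertible, that the diagonal matrix of eigenvalues in the spectral step has distinct entries so that eigenvectors are uniquely defined up to permutation, and that the flattening rank inequality is strict for the ``wrong'' quartet splits. In each case the bad locus is cut out by the vanishing of a single nonzero polynomial in the numerical parameters (a determinant, a discriminant, or a $(\kappa\!+\!1)\!\times\!(\kappa\!+\!1)$ minor), so it lies in a proper algebraic subvariety of $S_T$ and the arguments go through on a Zariski-dense open set.
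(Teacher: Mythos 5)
Your proposal is correct, and it is essentially the argument the paper is relying on: the paper gives no proof of this theorem, instead attributing it to Chang's work \cite{Chang}, and your part (2) is precisely Chang's tripod spectral argument --- writing $P_{ab}=A^{T}\mathrm{diag}(\pi_v)B$ and $P_{abc}(\cdot,\cdot,k)=A^{T}\mathrm{diag}(\pi_v)\mathrm{diag}(C(\cdot,k))B$, so that $P_{abc}(\cdot,\cdot,k)P_{ab}^{-1}$ is conjugate to $\mathrm{diag}(C(\cdot,k))$ with eigenvectors the rows of $A$, followed by propagation across internal edges via $M_e = A(A')^{-1}$. The one point where you genuinely diverge from Chang is part (1): Chang identified the topology via the additivity of log-determinant distances and the four-point condition, whereas you use $(\kappa+1)$-minors of the $\kappa^2\times\kappa^2$ quartet flattenings plus the split-equivalence (Buneman) theorem. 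That substitution is not only valid but is exactly the edge-invariant route the paper itself records as Theorem \ref{thm:unmixededge} and then generalizes to mixtures in Lemma \ref{lem:edgemixture}, so your version is, if anything, better aligned with the paper's framework; the only detail worth making explicit is a witness choice of parameters showing the flattening rank exceeds $\kappa$ for the two wrong quartet splits (so that Proposition \ref{prop:intvar} applies), together with the observation that marginalizing generic parameters on $T$ to a four-leaf subset yields generic parameters on the induced quartet --- both standard, and consistent with your closing genericity discussion.
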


An additional issue for identifiability of $r$-tree mixtures
is component swapping: Interchanging the trees along with their
parameters, while permuting the mixing parameters in the same way,
has no effect on the resulting distribution. A useful notion
of identifiability must allow for this.

\begin{defn}
The tree parameters of the $r$-tree mixture are generically identifiable if  
for any binary trees $\bfT = (T_1, \ldots, T_r)$ on the same set of taxa, and generic choices of parameters $s_1, \ldots, s_r, \pi$, 
$$
\psi_\bfT(s_1, \ldots, s_r, \pi)  =  \psi_{\bfT'}(s_1', \ldots, s_r', \pi')
$$
implies that $\bfT = \sigma \cdot \bfT'$ for some $\sigma\in \fraks_r$, the symmetric group of  permutations.
\end{defn}

We also investigate identifiability of tree parameters when restricting to specific classes of $r$-tuples of trees.  For example, Theorem \ref{thm:intro} concerns identifiability of tree parameters among all sets $\bfT = \{T_1, \ldots, T_r\}$, where $T_1 = \cdots = T_r$.   Our main results, Theorems \ref{thm:numparam} and \ref{thm:treeparam} concern identifiability in the class of $r$-tuples of trees that all contain a specified deep common substructure, whose precise definition will be given in Section \ref{sec:commonsubstructure}.

\begin{defn}
The continuous parameters of an $r$-tree mixture on $\bfT$ are generically identifiable if for generic choices of $s_1, \ldots, s_r$ and $\pi$, 
$$
\psi_\bfT(s_1, \ldots, s_r, \pi)  =  \psi_{\bfT}(s_1', \ldots, s_r', \pi')
$$
implies that there is a permutation $\sigma \in \fraks_r$ such that $\sigma \cdot \bfT = \bfT$, $s'_i = s_{\sigma(i)}$, and $\pi'_i = \pi_{\sigma(i)}$ for $i = 1, \ldots, r$. 
\end{defn}
Note this definition only allows the swapping of continuous parameters $s_i,\pi_i$ with $s_j,\pi_j$ when $T_i = T_j$.

\smallskip
\subsection{Splits and tripartitions}
We will use the combinatorial notion of a split of the leaves of a tree associated to an edge in a binary tree, as well as the analog of this concept for a node of the tree.

\begin{defn}
A split of $[n]$ is a bipartition $A|B$ of $[n]$ with two nonempty elements.  A split is said to be compatible with a tree $T$ if it arises as the partition of leaves induced by an edge in some binary resolution of $T$.  

Similarly, a tripartition of $A|B|C$ of leaves is said to be compatible with $T$ if it arises as the tripartition induced by an interior vertex in some binary resolution of $T$.

A collection of trees is said to have a common split (or tripartition) if the split (or tripartition) is compatible with every tree in the collection.  
\end{defn}

A collection of trees has a common tripartition $A|B|C$ if, and only if, it also the three common splits $A|B \cup C$, $B | A \cup C$, and $C|A \cup B$. For a binary tree, these are the splits associated to the edges radiating from the vertex inducing the tripartition.   Note also that our definition of compatible splits differs from the standard definition (\emph{e.g.}, in \cite{SemSt}) in the case of trees with polytomies.  Our notion is more useful when studying geometric properties of phylogenetic models.  

%%%%%%%%%%%%%%%%%%%%%%%%%%%%%%%%%%
%%%%%%%%%%%%%%%%%%%%%%%%%%%%%%%%%%
%%%%%%%%%%%%%%%%%%%%%%%%%%%%%%%%%%
%%%%%%%%%%%%%%%%%%%%%%%%%%%%%%%%%%

\section{Tensors and Invariants}\label{sec:tensors}

The two main tools we use to prove our results are Kruskal's theorem on uniqueness of tensor decompositions and phylogenetic invariants.  In this section, we describe these tools. Both are connected to the notion of a flattening of the probability distribution arising from a phylogenetic model.

\subsection{Tensors and Unique Decomposition} By a \emph{tensor}, we mean simply an $n$-way rectangular array of numbers. A  2-way tensor is thus a matrix.

For $j = 1,2, 3,$ let $M_j$ be an $r \times \kappa_j$ matrix with  $i$th row $\bfm^j_i = (m^j_i(1), \ldots, 
m^j_i(\kappa_j)$.  Let $[M_1, M_2, M_3]$ denote the 3-way $\kappa_1 \times \kappa_2 \times \kappa_3$ tensor defined by
$$
[M_1, M_2, M_3]  =  \sum_{i = 1}^r  \bfm^1_i \otimes \bfm^2_i \otimes \bfm^3_i.
$$
In other words, $[M_1, M_2, M_3]$ is an $\kappa_1 \times \kappa_2 \times \kappa_3$ array whose $(u,v,w)$ entry is 
$$
[M_1, M_2, M_3]_{u,v,w}  =  \sum_{i = 1}^r  m^1_i(u) m^2_i(v) m^3_i(w).
$$
Every 3-way tensor can be expressed in this way, for sufficiently large $r$. A nonzero tensor of this form with $r=1$ is said to have tensor rank 1. More generally, the minimal $r$ such that  a 3-way tensor can be decomposed as such a sum is called its \emph{tensor rank}. A natural question is when
this expression is essentially unique.

Note there are two basic operations on the matrices $M_1, M_2, M_3$ which leave unchanged the tensor $[M_1, M_2, M_3]$:  one can simultaneously permute the rows of the three matrices $M_1, M_2, $ and $M_3$, or taking three numbers $a_1, a_2, a_3$ such that $a_1a_2a_3 = 1$, one can replace the $i$th rows $\bfm^j_i$ by $a_j \bfm^j_i$.  Kruskal's Theorem \cite{Kruskal1976, Kruskal1977} describes a situation where these operations lead to the only variants in a tensor decomposition.

Given an $r \times \kappa$ matrix $M$, its \emph{Kruskal rank}, denoted ${\rm rank}_K(M)$, is the largest value $k$ such that every subset of $k$ rows of $M$ is linearly independent.  Note that ${\rm rank}_K(M) \leq {\rm rank}(M)$.

\begin{thm}[\cite{Kruskal1976,Kruskal1977}]\label{thm:kruskal}
Let $I_j = {\rm rank}_K(M_j)$, where $M_j$ is $r \times \kappa_j$.  If
$$I_1 + I_2 + I_3 \geq 2r +2 $$
then $[M_1, M_2, M_3]$ uniquely determines $M_1, M_2, M_3$ up to simultaneous permutation and scaling of the rows.
\end{thm}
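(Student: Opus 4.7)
The plan is to follow Kruskal's original argument, which splits naturally into a combinatorial linear-algebra statement (Kruskal's permutation lemma) and its application to tensor slices.

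Suppose $[M_1, M_2, M_3] = [N_1, N_2, N_3]$, both decompositions of size $r$. For any vector $z \in \rr^{\kappa_3}$, contracting the third factor produces a $\kappa_1 \times \kappa_2$ matrix
$$
T_z \;=\; \sum_{i=1}^r (\bfm^3_i \cdot z)\, \bfm^1_i \otimes \bfm^2_i \;=\; M_1^T D(M_3 z)\, M_2 \;=\; N_1^T D(N_3 z)\, N_2,
$$
where $D(\cdot)$ denotes the diagonal matrix built from a vector. The Kruskal rank hypotheses on $M_1$ and $M_2$ give a lower bound on $\operatorname{rank}(T_z)$ in terms of the number of nonzero entries of $M_3 z$: since every $I_1$ rows of $M_1$ and every $I_2$ rows of $M_2$ are independent, a diagonal $D(M_3 z)$ with $s$ nonzero entries forces $\operatorname{rank}(T_z) \geq \min(s, I_1 + I_2 - r)$. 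Combined with the analogous formula from the $N$-decomposition, this produces an inequality between the supports of $M_3 z$ and $N_3 z$ that holds for every $z$, and the hypothesis $I_1 + I_2 + I_3 \geq 2r + 2$ is precisely what makes the comparison nontrivial.

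Kruskal's permutation lemma, the main combinatorial ingredient, then asserts: if $A$ and $B$ have the same number of rows, $A$ has sufficiently large Kruskal rank, and the supports of $v^T A$ and $v^T B$ satisfy a suitable inequality for every vector $v$, then $B = PDA$ for some permutation matrix $P$ and invertible diagonal $D$. I would apply this lemma to the pair $(M_3, N_3)$ to obtain $N_3 = P D_3 M_3$, and then repeat the contraction argument along each of the other two factors to conclude $N_j = P D_j M_j$ for $j=1,2$ with the \emph{same} permutation $P$. A final bookkeeping step, using that both decompositions sum to the same tensor, forces $D_1 D_2 D_3 = I$, yielding exactly the allowed ``scale by $a_1, a_2, a_3$ with $a_1 a_2 a_3 = 1$'' freedom.

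The main obstacle is the permutation lemma itself. Its statement is elementary, being purely about supports of row-span vectors, but its proof is delicate, requiring induction on the number of rows together with a case analysis that exploits the Kruskal rank hypothesis to locate coordinates in which rows of $A$ can be separated one at a time. The secondary difficulty is the rank/support bound for $T_z$: although each ingredient (independence of row subsets, rank-nullity, the behavior of rank under diagonal scaling) is standard, keeping the Kruskal bounds sharp is where the constant $2r+2$ enters, and any slack in this step would weaken the conclusion below the threshold claimed.
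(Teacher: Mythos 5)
Note first that the paper contains no proof of this statement to compare against: Theorem~\ref{thm:kruskal} is quoted from \cite{Kruskal1976,Kruskal1977} and used as a black box. So the relevant benchmark is Kruskal's own argument, and your outline does reproduce its architecture faithfully: the contraction $T_z = M_1^T D(M_3 z) M_2$, a rank--support bound for such slices (your bound $\operatorname{rank}(T_z) \geq \min(s, I_1 + I_2 - r)$ is correct, via Sylvester's inequality applied after restricting to the $s$ rows in the support of $M_3 z$), the permutation lemma, and a final bookkeeping step giving the scaling constraint $D_1 D_2 D_3 = I$.

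As a proof, however, the proposal has genuine gaps beyond the one you flag. First, the permutation lemma \emph{is} the theorem, in the sense that everything surrounding it is routine linear algebra; stating it without proof makes this an outline of Kruskal's proof rather than a proof, and one could equally well have cited the theorem itself. Second, the step ``combined with the analogous formula from the $N$-decomposition'' is not available: the theorem imposes Kruskal-rank hypotheses only on $M_1, M_2, M_3$, and uniqueness must be established against an \emph{arbitrary} competing decomposition $[N_1,N_2,N_3]$ of size $r$, about whose Kruskal ranks nothing is assumed. The correct derivation uses the Kruskal-rank lower bound only on the $M$-side, paired with the trivial upper bound $\operatorname{rank}(T_z) \leq \omega(N_3 z)$ (the number of nonzero entries of $N_3 z$) on the $N$-side, plus preliminary rank facts about the $N_j$ extracted from flattenings of the common tensor; this asymmetry is precisely why the permutation lemma is formulated with hypotheses on only one of the two matrices. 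Third, ``repeat the contraction argument along each of the other two factors'' yields $N_j = P_j D_j M_j$ with three a priori \emph{different} permutations $P_1, P_2, P_3$; showing they coincide is a real step, not bookkeeping by fiat. One standard route: the hypothesis forces $I_a + I_b \geq r+2$ for each pair $a \neq b$, so the matrix whose $i$th row is $\bfm^a_i \otimes \bfm^b_i$ has full row rank $r$, which permits term-by-term matching of the two decompositions and simultaneously pins down the permutation and the condition $D_1 D_2 D_3 = I$. (As an aside, the first author of this very paper has elsewhere published a concise self-contained proof of Kruskal's theorem, which would be a natural source for filling in these details.)
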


Kruskal's theorem has proven useful for proving identifiability results of numerical parameters for both phylogenetic models \cite{Allman2009a}  and for other statistical models with hidden variables \cite{AllmanMatiasRhodes2009,AllmanMatiasRhodes2010}.  We will show how to combine this with other algebraic techniques to also deduce identifiability of tree parameters.

\subsection{Flattenings}
While Kruskal's theorem concerns 3-way tensors, the tensors arising in phylogenetics are usually $n$-way $\kappa \times \cdots \times \kappa$ tensors, corresponding to the $n$ leaves of a phylogenetic tree.  We will make frequent use of flattenings of  $n$-way tensors to lower order tensors.    A flattening of a $n$-way tensor is simply a reorganization of that tensor as a $k$-way tensor, with $k<n$, of larger dimensions.  We take a $\kappa_1 \times \cdots \times \kappa_n$ tensor $M$, with typical entry $M(u_1, \ldots, u_n)$, and a partition $A_1 |A_2| \cdots | A_k$ of $[n]$, and we represent this as a 
$$
\prod_{a \in A_1} \kappa_a  \times \cdots \times \prod_{a \in A_k} \kappa_a
$$
tensor $\tilde{M}$.  The $(u_1, \ldots, u_n)$ entry of $M$ becomes
the $( (u_a)_{a \in A_1} ,  \ldots, (u_a)_{a \in A_k})$ entry of $\tilde{M}$. 
That is, the indices for the new tensor $\tilde{M}$ are vectors of indices from the tensor $M$.

Given a partition $A_1 | A_2|\cdots | A_k$ of $[n]$, we denote the corresponding flattening of $M$ by $\flat_{A_1 | A_2|\cdots | A_k}(M)$.

\subsection{Invariants, Phylogenetic and Otherwise}\label{subsec:inv}

We begin with a little background on algebraic geometry (see \cite{CLOS} for more detail).  Let $\rr[p_1, \ldots, p_m]$ be the set of all polynomials in the variables (or indeterminates) $p_1, p_2, \ldots, p_m$, with coefficients in the real numbers, $\rr$.  Algebraic geometry studies the zero sets of collections of polynomials.  That is, to a collection of polynomials $f_1, f_2, \ldots, f_k \in \rr[p_1, \ldots, p_m]$ we associate the \emph{variety}  
$$
V(f_1, \ldots, f_k) =  \left\{ \bfa \in \rr^m :  f_1(\bfa) = f_2(\bfa) = \cdots = f_k(\bfa) = 0 \right\}.
$$
The fact that these geometric sets arise from polynomials vanishing implies they have important structural features.

Varieties arise in studying statistical models through describing models implicitly, rather than parametrically.  For a fixed statistical model $\calm\subseteq \Delta^{m-1}$, an \emph{invariant} of $\calm$ is a polynomial $f \in \rr[p_1, \ldots, p_m]$ such that $f(\bfa) = 0$ for all $\bfa \in \calm$.  In the case where $\calm$ is a phylogenetic model, such a polynomial is  called a \emph{phylogenetic invariant}.

Our main use in this paper for phylogenetic invariants is their connection to generic identifiability, through the following basic proposition from algebraic geometry.

\begin{prop}\label{prop:intvar}
Let $V_0$ and $V_1$ be two irreducible algebraic varieties, such as those arising from parameterized statistical models. Suppose $f_0$ is an invariant for $V_0$, and there exists a point $p_1 \in V_1$ with $f_0(p_1) \neq 0$  Then  $V_1 \not\subseteq V_0$, and the variety $V_0\cap V_1$ is of lower dimension than $V_1$. That is, generic points on $V_1$ lie off of $V_0$.
\end{prop}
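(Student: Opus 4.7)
The plan is to break the statement into its two assertions and dispatch each by a short appeal to standard facts about irreducible varieties.

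For the first assertion ($V_1 \not\subseteq V_0$), I would argue by contradiction. Suppose $V_1 \subseteq V_0$. Since $f_0$ is an invariant for $V_0$, it vanishes identically on $V_0$, hence also on $V_1$. In particular $f_0(p_1) = 0$, contradicting the hypothesis. This step uses only the definition of an invariant, so it is immediate.

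For the second assertion (that $V_0 \cap V_1$ has strictly smaller dimension than $V_1$), I would first observe that $V_0 \cap V_1$ is a closed subvariety of $V_1$, being the intersection of two varieties (it is the zero set in $V_1$ of the defining polynomials of $V_0$, or more simply it is defined by adjoining those polynomials to the ideal of $V_1$). Because $p_1 \in V_1$ satisfies $f_0(p_1)\neq 0$, we have $p_1\notin V_0$, so $p_1\notin V_0\cap V_1$. Therefore $V_0\cap V_1$ is a \emph{proper} closed subvariety of $V_1$. The key invocation is then the standard fact from algebraic geometry: any proper closed subvariety of an irreducible variety has strictly smaller dimension. (Equivalently: an irreducible variety cannot be expressed as a finite union of proper closed subvarieties, so decomposing $V_0\cap V_1$ into irreducible components shows each such component sits in $V_1$ with strictly smaller dimension, and we can take the max.)

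Finally, for the concluding interpretation, I would note that the complement $V_1 \setminus (V_0\cap V_1)$ is a nonempty Zariski open subset of $V_1$, and that any nonempty Zariski open subset of an irreducible variety is dense; this is precisely what is meant by the phrase \emph{generic points on $V_1$ lie off of $V_0$}, matching the usage of ``generic'' defined earlier in the paper. There is no real obstacle here: the content of the proposition is simply a repackaging of the elementary dimension theory of irreducible varieties in a form tailored to the identifiability arguments that follow, and the entire argument can be written in a few lines once those standard facts are cited.
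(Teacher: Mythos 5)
Your proposal is correct, and it is complete. One point of comparison worth knowing: the paper itself offers no proof of this proposition at all --- it is stated as a ``basic proposition from algebraic geometry'' and used as a black box in the proofs of Lemma \ref{lem:edgemixture} and the No Shuffling Lemma. So there is no argument in the paper to diverge from; what you have written is exactly the standard justification the authors implicitly rely on. Your two steps are the right ones: the containment $V_1 \subseteq V_0$ would force $f_0$ to vanish at $p_1$, and once $V_0 \cap V_1$ is known to be a \emph{proper} Zariski-closed subset of the irreducible variety $V_1$, the strict dimension drop follows from the standard fact that a proper closed subvariety of an irreducible variety has strictly smaller (Krull) dimension --- a fact valid over any field, in particular over $\rr$, which matters here since the paper works with real varieties. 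Your final observation also matches the paper's conventions precisely: the authors define ``generic'' as ``except on a proper algebraic subvariety'' of measure zero, so identifying the complement $V_1 \setminus (V_0 \cap V_1)$ as a nonempty Zariski-open (hence dense) subset of $V_1$ is exactly the intended reading of ``generic points on $V_1$ lie off of $V_0$.'' The only stylistic remark: in your parenthetical alternative argument, the non-union property of irreducible varieties by itself does not yield the dimension inequality for each component --- you still need the dimension fact applied componentwise, as your main argument does --- so the parenthetical is redundant rather than an independent route.
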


Among the most important and elementary phylogenetic invariants are the ones that arise from edge flattenings of tensors.  

\begin{defn} Let $A|B$ be a split compatible with the tree $T$. 
An \emph{edge invariant} for $T$ is a phylogenetic invariant that can be expressed as  a minor ({\it i.e.}, the determinant of a submatrix) of the matrix $\flat_{A|B}(P)$.
\end{defn}

As an indication of how edge invariants can be used to identify combinatorial information on the tree underlying a phylogenetic model, we recall the following theorem concerning models on a single tree.  While this statement is well-known in the phylogenetic invariants literature,  Theorem \ref{lem:edgemixture} of this article provides a more
general extension to mixture models.

\begin{thm}\label{thm:unmixededge}
Suppose that $T_0$ and $T_1$ are two $n$-leaf trees such that for $i = 0,1$,  $A_i|B_i$ is a split compatible with $T_i$ and incompatible with $T_{1-i}$, and let $\calm_i$ denote the $\kappa$-state general Markov model $GM(\kappa)$ on $T_i$.
Then the $(\kappa + 1)$-minors of $\flat_{A_i | B_i}(P)$ vanish on $\calm_{T_i}$ and do not vanish on $\calm_{T_{1-i}}$, and thus are edge invariants for the first model but not the second. 
In particular, edge invariants
can be used to generically identify the tree topology. 
\end{thm}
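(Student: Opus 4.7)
The plan splits the claim into two assertions: the $(\kappa+1)$-minors of $\flat_{A_i|B_i}(P)$ vanish on all of $\calm_{T_i}$ but do not vanish on all of $\calm_{T_{1-i}}$. Once both are established, Proposition~\ref{prop:intvar} separates the two models; and since distinct binary $n$-leaf trees differ in at least one split, the full collection of edge invariants ranging over all splits suffices to recover the underlying unrooted topology.

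For the vanishing direction I would use the standard edge-flattening factorization. Let $e$ be the edge of $T_i$ inducing $A_i|B_i$, and root $T_i$ at one of its endpoints. Conditional independence of the two subtrees given the hidden state $s$ at that root yields
\[
P(x_{A_i},x_{B_i}) \;=\; \sum_{s=1}^{\kappa} \pi_s\, Q_A(x_{A_i}\mid s)\,Q_B(x_{B_i}\mid s),
\]
where $Q_A$ and $Q_B$ are the conditional distributions of the two pendant subtrees (with $Q_B$ absorbing the transition along $e$). Collecting the factors indexed by $s$ into two matrices exhibits $\flat_{A_i|B_i}(P)$ as the product of a $\kappa^{|A_i|}\times\kappa$ matrix with a $\kappa\times\kappa^{|B_i|}$ matrix, so its rank is at most $\kappa$ and every $(\kappa+1)$-minor vanishes identically on $\calm_{T_i}$.

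For the non-vanishing direction I would first reduce to a quartet. Incompatibility of $A_i|B_i$ with $T_{1-i}$ forces the existence of $a_1,a_2\in A_i$ and $b_1,b_2\in B_i$ whose induced quartet topology in $T_{1-i}$ is $a_1b_1|a_2b_2$ or $a_1b_2|a_2b_1$ rather than $a_1a_2|b_1b_2$. Marginalizing $P$ to these four leaves amounts to left- and right-multiplying $\flat_{A_i|B_i}(P)$ by $0/1$ summation matrices and hence cannot increase the rank, so it suffices to exhibit one parameter choice on the induced quartet tree $T'$ for which $\rank \flat_{\{a_1,a_2\}|\{b_1,b_2\}}(P')>\kappa$. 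Rooting $T'$ at an endpoint of its internal edge $f$, this quartet flattening can be written as
\[
\sum_{s,t=1}^{\kappa} \pi_s\,(T_f)_{s,t}\,\bigl(\bfm^{a_1}_s\otimes \bfm^{a_2}_t\bigr)\bigl(\bfm^{b_1}_s\otimes \bfm^{b_2}_t\bigr)^{T},
\]
a sum of $\kappa^2$ rank-one outer products indexed by the pair of endpoint states of $f$. I would then exhibit explicit numerical parameters---e.g.\ leaf Markov matrices equal to small generic perturbations of the identity, strictly positive root distribution, and generic $T_f$---under which these $\kappa^2$ outer products are linearly independent, producing rank $\kappa^2>\kappa$ whenever $\kappa\ge 2$.

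The main obstacle is this final step. The reduction to a quartet is purely combinatorial, and the vanishing direction is a routine conditional-independence computation. Producing a single witness under which the quartet flattening attains rank greater than $\kappa$, however, requires care, since natural candidates such as identity transition matrices collapse the rank. A perturbative argument or a direct determinantal computation at a carefully chosen point should settle it, and then Proposition~\ref{prop:intvar} promotes the single witness to generic non-vanishing on $\calm_{T_{1-i}}$, completing the separation and hence the identifiability of tree topology via edge invariants.
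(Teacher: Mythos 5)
Your proposal is correct, and it essentially reconstructs the argument the paper gives --- though it is worth knowing that the paper never proves Theorem~\ref{thm:unmixededge} directly: it is cited as well known, and the paper instead proves the mixture generalization, Lemma~\ref{lem:edgemixture}, whose $r=1$ case recovers this statement. Your vanishing direction is identical to the paper's (factor the flattening through the $\kappa$ states on the edge inducing the split), and both non-vanishing arguments produce a single rank-$\kappa^2$ witness and promote it to generic non-vanishing via Proposition~\ref{prop:intvar}. The execution of the witness differs slightly: the paper stays on the $n$-leaf tree, invoking Theorem 3.8.6 of \cite{SemSt} to get an edge $e=(c,d)$ of $T_{1-i}$ whose split $C|D$ meets $A_i$ and $B_i$ in four nonempty intersections, then puts identity matrices on all other edges so that the flattening becomes $N_{1,A}^T R_1 N_{1,B}$ with the $N$'s zero-one matrices of full row rank and $R_1$ an invertible $\kappa^2\times\kappa^2$ diagonal, hence rank $\kappa^2$; your marginalization to a mixed quartet is the same maneuver in different clothing (your $a_1,a_2,b_1,b_2$ are precisely representatives of those four intersections, and the rank bound under summation matrices replaces the contract-by-identity trick), and is arguably a cleaner reduction. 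One correction: the step you flag as the main obstacle is immediate, and your worry that identity transition matrices collapse the rank is misplaced. In your decomposition the row factors $\bfm^{a_1}_s\otimes\bfm^{a_2}_t$ are exactly the rows of the Kronecker product $M^{a_1}\otimes M^{a_2}$, and similarly for the columns; taking the pendant-edge matrices to be \emph{exact} identities makes both factor matrices equal to $I_{\kappa^2}$, and with a positive root distribution and any internal-edge matrix $T_f$ having all entries nonzero, the coefficients $\pi_s(T_f)_{s,t}$ are all nonzero, so the flattening has rank exactly $\kappa^2$. The collapse to rank $\kappa$ occurs only when $T_f$ itself is (a permutation of) the identity, i.e., when the internal edge is effectively contracted, so no perturbative or determinantal argument is needed beyond observing that a Kronecker product of invertible matrices is invertible.
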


Edge invariants have been the phylogenetic invariants most interesting for tree identifiability in the past, and contain enough information to reconstruct the combinatorial type of a single tree in some situations. However, we need some more complicated invariants to get more information in the case of the phylogenetic mixture models considered here.  We describe these invariants, discovered in several different contexts \cite{Allman2003,Strassen},  in matrix form.

\begin{thm}\label{thm:strassen}
Let $P$ be a $\kappa\times\kappa\times \kappa$ tensor giving a distribution from the $GM(\kappa)$ model on a 3-leaf tree.  For $i = 1, \ldots, \kappa$,  let $P_{(i)}$ be the matrix slice $P_{(i)} = (P(i,u,v))_{u,v}$.  Then 
$$
P_{(i)} \left ( \adj P_{(j)} \right )  P_{(k)} -  P_{(k)} \left ( \adj P_{(j)} \right ) P_{(i)}  = 0.
$$
\end{thm}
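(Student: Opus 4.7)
The plan is to parametrize the tensor $P$ via the $GM(\kappa)$ model on the 3-leaf tree with the root placed at the unique internal vertex, reducing the identity to a transparent manipulation of diagonal matrices and adjugates.

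First, with the root at the central node, let $\pi\in\rr^\kappa$ be the root distribution and let $M_1,M_2,M_3$ be the Markov matrices on the three edges leading to the leaves, so
\[
P(u,v,w)=\sum_{h=1}^{\kappa}\pi_h\, M_1(h,u)\,M_2(h,v)\,M_3(h,w).
\]
Fixing the first index $i$ and setting $D_i=\mathrm{diag}\bigl(\pi_h M_1(h,i)\bigr)_{h=1}^{\kappa}$, I would read off the slice factorization
\[
P_{(i)}=M_2^{\mathsf T} D_i\, M_3.
\]
(Theorem \ref{thm:basictree} and the standard root-relocation property of $GM(\kappa)$ allow this placement of the root without loss of generality; the final identity is polynomial in the entries of $P$, so establishing it on a Zariski-open set suffices.)

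Second, I would use the two fundamental adjugate identities: the multiplicativity $\adj(AB)=\adj(B)\adj(A)$ for any square matrices $A,B$, and $A\,\adj(A)=\adj(A)\,A=\det(A)\,I$. Applying multiplicativity to $\adj P_{(j)}=\adj(M_3)\,\adj(D_j)\,\adj(M_2^{\mathsf T})$ and substituting,
\[
P_{(i)}\,\adj(P_{(j)})\,P_{(k)}=M_2^{\mathsf T}D_i\bigl(M_3\adj(M_3)\bigr)\adj(D_j)\bigl(\adj(M_2^{\mathsf T})M_2^{\mathsf T}\bigr)D_k M_3.
\]
The two parenthesized factors collapse to $\det(M_3)\,I$ and $\det(M_2)\,I$, leaving
\[
P_{(i)}\,\adj(P_{(j)})\,P_{(k)}=\det(M_2)\det(M_3)\,M_2^{\mathsf T}\,D_i\,\adj(D_j)\,D_k\,M_3.
\]

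Third, the key observation is that $D_i$, $\adj(D_j)$ and $D_k$ are all diagonal matrices, hence commute. Therefore the product $D_i\,\adj(D_j)\,D_k$ is symmetric in $i$ and $k$, and swapping $i\leftrightarrow k$ yields exactly $P_{(k)}\,\adj(P_{(j)})\,P_{(i)}$, giving the claimed vanishing on the locus where $M_2,M_3$ are invertible. Since both sides of the stated identity are polynomial functions of the model parameters, and hence of the entries of $P$, the identity extends to all parameter choices by the Zariski closure argument of Proposition \ref{prop:intvar}.

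I do not expect any genuine obstacle here: the only mild point to be careful about is that the adjugate factorization $\adj(AB)=\adj(B)\adj(A)$ must be invoked as a universal polynomial identity (so that non-invertibility of the $M_k$ causes no trouble), and the root relocation is justified only generically, with the identity then extended to the whole model by Zariski density.
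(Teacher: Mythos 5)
Your proof is correct. Note first that the paper itself gives no proof of this theorem: it is quoted from the literature (the citations to Allman--Rhodes and Strassen), so there is no in-paper argument to compare against; your derivation is essentially the standard one by which these commutation relations are obtained. The computation is sound: the slice factorization $P_{(i)}=M_2^{\mathsf T} D_i M_3$ with $D_i=\mathrm{diag}(\pi_h M_1(h,i))_h$ is exactly right, and the chain of universal adjugate identities collapses the product to $\det(M_2)\det(M_3)\, M_2^{\mathsf T} D_i \adj(D_j) D_k M_3$, which is visibly symmetric under $i\leftrightarrow k$ since diagonal matrices commute. Two small simplifications you could make: (i) since $\adj(AB)=\adj(B)\adj(A)$ and $A\adj(A)=\adj(A)A=\det(A)I$ hold for \emph{all} square matrices, your final expression is valid for every choice of $M_2,M_3$, so the restriction to the locus where $M_2,M_3$ are invertible, and the subsequent Zariski-density extension in the parameters, are unnecessary; (ii) even the generic root-relocation step can be avoided entirely, because for any rooting of the $3$-leaf tree the Markov property gives that the three leaf variables are conditionally independent given the state $h$ at the internal vertex, so the decomposition $P(u,v,w)=\sum_h \Pr(h)\Pr(u\,|\,h)\Pr(v\,|\,h)\Pr(w\,|\,h)$ --- and hence your slice factorization --- holds exactly for every distribution in the model, not just generically. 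Finally, Proposition \ref{prop:intvar} is not quite the right tool for your closure step (it concerns showing a point lies off a variety, not extending an identity); if you do keep the generic-then-extend structure, plain continuity of the parametrization map together with density of generic parameters suffices. None of these affects correctness.
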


Here $\adj A$ denotes the classical adjoint of $A$, which is given by polynomial expressions in the entries of $A$. In the case of nonsingular $A$, $\adj (A)  =  \det (A)  A^{-1}$.

%%%%%%%%%%%%%%%%%%%%%%%%%%%%%%%%%%
%%%%%%%%%%%%%%%%%%%%%%%%%%%%%%%%%%
%%%%%%%%%%%%%%%%%%%%%%%%%%%%%%%%%%
%%%%%%%%%%%%%%%%%%%%%%%%%%%%%%%%%%

\section{Identifiability of Mixture Models with Common Substructure}\label{sec:commonsubstructure}

In this section, we prove our main result, that both tree parameters and numerical parameters are generically identifiable in a phylogenetic mixture model provided we restrict to multisets $\bfT$ of trees that all share a certain substructure. More precisely, we require that all trees in $\bfT$ have two splits in common. The number of mixing components that can be identified via our techniques will depend on the sizes of the sets in these splits.  As a corollary, we deduce Theorem \ref{thm:intro}, after showing that if all trees are the same, there is a ``deep'' internal vertex with two of its incident edges giving the requisite splits.

Before proceeding to the statements and proofs of the main theorems, we prove three lemmas.

\begin{lemma}\label{lem:edgemixture}
(Edge invariants for tree mixtures)

Consider the $GM(\kappa)$ mixture model on $r$ trees $\bfT = (T_1, \ldots, T_r)$. Let $A|B$ be a bipartition of the taxa, with $ r < \min(\kappa^{\#A-1},\kappa^{\#B-1})$
\begin{enumerate}
\item  \label{lemit:1} If $A|B$ is compatible with all trees in $\bfT$, then all $(r\kappa+1)$-minors of $\flat_{A|B}(P)$ vanish for all distributions $P$ arising from the model.
\item  \label{lemit:2} If $A|B$ is not compatible with at least one tree in $\bfT$,  then for generic distributions $P$ arising from the model at least one
 $(r\kappa+1)$-minor of  $\flat_{A|B}(P)$ does not vanish.
 
\end{enumerate}
\end{lemma}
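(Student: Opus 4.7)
For part (\ref{lemit:1}) the argument is the classical Markov factorization combined with subadditivity of rank. For each tree $T_i$ with $A|B$ compatible, the split is induced by an edge of $T_i$, and conditioning on the hidden state at one endpoint of that edge yields a factorization
\[ \flat_{A|B}(\psi_{T_i}(s_i)) = M_{A,i}\,N_{B,i}, \]
where $M_{A,i}$ is of size $\kappa^{\#A}\times\kappa$ and $N_{B,i}$ is of size $\kappa\times\kappa^{\#B}$. Summing over the mixture, $\flat_{A|B}(P) = \sum_{i=1}^r \pi_i M_{A,i}N_{B,i}$ factors as the product of a $\kappa^{\#A}\times r\kappa$ matrix with a $r\kappa\times\kappa^{\#B}$ matrix, so its rank is at most $r\kappa$ and every $(r\kappa+1)$-minor vanishes identically on the model.

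For part (\ref{lemit:2}), the vanishing of all $(r\kappa+1)$-minors is a Zariski-closed condition on the parameters, so it suffices to exhibit one parameter value at which $\flat_{A|B}(P)$ has rank strictly greater than $r\kappa$; generic nonvanishing of at least one minor then follows. Reindex so that $T_r$ is incompatible with $A|B$. Theorem \ref{thm:unmixededge} applied on $T_r$ alone guarantees that for generic $s_r$ the single-tree flattening $\flat_{A|B}(\psi_{T_r}(s_r))$ has rank $\rho_r\ge\kappa+1$; for each $i<r$, generic $s_i$ gives $\flat_{A|B}(\psi_{T_i}(s_i))$ of rank $\rho_i$, equal to $\kappa$ in the compatible case. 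Writing each single-tree flattening in rank-factored form $M_{A,i}N_{B,i}$, with $M_{A,i}$ of size $\kappa^{\#A}\times\rho_i$ and full column rank and $N_{B,i}$ of size $\rho_i\times\kappa^{\#B}$ and full row rank, we obtain
\[ \flat_{A|B}(P) = \bigl[\pi_1 M_{A,1}\,\big|\cdots\big|\,\pi_r M_{A,r}\bigr]\begin{pmatrix} N_{B,1}\\ \vdots\\ N_{B,r} \end{pmatrix}. \]
If the block-concatenated left factor has full column rank $\sum_i\rho_i$, and the block-concatenated right factor has full row rank $\sum_i\rho_i$, then the mixture flattening has rank $\sum_i\rho_i\ge(r-1)\kappa+(\kappa+1)=r\kappa+1$.

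The main obstacle is verifying that this block-independence of the individual factorizations can actually be realized at some parameter value. The hypothesis $r<\min(\kappa^{\#A-1},\kappa^{\#B-1})$ gives $\sum_i\rho_i\le r\kappa+1\le\min(\kappa^{\#A},\kappa^{\#B})$, so the combined factors fit inside the ambient matrix dimensions. I would argue that for generic, independent choices of parameters on the $r$ trees, the column space of each $M_{A,i}$ is a suitably varied $\rho_i$-dimensional subspace of $\rr^{\kappa^{\#A}}$, so that the $r$ subspaces can be placed in direct sum; the analogous statement on the $B$-side handles the right factor. One can justify this either by a dimension count on the parameter-to-Grassmannian map associated to each tree, or by an explicit specialization of Markov parameters (for instance, composing each $\psi_{T_i}$ with a distinct permutation action on the hidden states so that the images of the various $M_{A,i}$ populate essentially disjoint coordinate blocks of $\rr^{\kappa^{\#A}}$). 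Either route yields a concrete parameter value at which some $(r\kappa+1)$-minor of $\flat_{A|B}(P)$ is nonzero, which is all that is needed.
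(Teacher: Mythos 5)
Part (\ref{lemit:1}) of your proposal is essentially the paper's proof (factor each compatible component's flattening through the edge realizing the split, then bound the rank of the sum by $r\kappa$), and your reduction of part (\ref{lemit:2}) to exhibiting a single parameter point with rank exceeding $r\kappa$ is also the paper's first move. But the core of your part (\ref{lemit:2}) has a genuine gap: the inequality $\sum_i \rho_i \le r\kappa+1$ is unjustified and in general false. Theorem \ref{thm:unmixededge} only gives $\rho_r \ge \kappa+1$; the generic rank of the incompatible component's flattening is typically much larger (in the paper's own specialization it is $\kappa^2$), so $\sum_i \rho_i = (r-1)\kappa + \rho_r$ can exceed $\min(\kappa^{\#A},\kappa^{\#B})$ while the lemma's hypothesis still holds --- for instance $\kappa=3$, $\#B=2$, $r=2$ satisfies $r<\kappa^{\#B-1}=3$, yet $(r-1)\kappa+\kappa^2=12>9=\kappa^{\#B}$. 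In that regime the $r$ column spaces (and row spaces) simply cannot be placed in direct sum, your displayed block factorization cannot have full rank $\sum_i\rho_i$, and no general lower bound on the rank of a sum rescues the argument (one only gets $\rank(M_1+M_2)\ge \rho_r-(r-1)\kappa$, which does not beat $r\kappa$). This is exactly why the paper's proof splits into two cases: when $(r-1)\kappa+\kappa^2>\kappa^{\#B}$ it \emph{deliberately degenerates} the incompatible component, zeroing all but $\kappa^{\#B}-(r-1)\kappa$ entries of the diagonal factor $R_1$, so that the pieces fit and the total rank becomes $\kappa^{\#B}>r\kappa$. Your approach needs an analogous maneuver --- a specific non-generic choice for the incompatible tree whose flattening rank matches the remaining room --- and that step is missing.

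The general-position claim itself is also not established, and the explicit specialization you offer does not work: composing $\psi_{T_r}$ (or any $\psi_{T_i}$) with a permutation of the \emph{hidden} states is label swapping, which permutes the rows of $M_{A,i}$ and $N_{B,i}$ simultaneously and leaves the distribution --- hence the column space of each flattening --- completely unchanged, so it cannot move the images into disjoint coordinate blocks. (Permutation matrices on pendant edges would move observed patterns instead, but making the incompatible tree's parameters deterministic forces its flattening to have rank at most $\kappa$, destroying the $\rho_r\ge\kappa+1$ you need.) The paper achieves general position by a concrete mechanism you should compare with: collapse $T_2,\dots,T_r$ to star trees so that the stacked factors $N_A$, $N_B$ are generalized Vandermonde matrices of generic full rank; give the incompatible component coordinate-subspace image and kernel by putting identity matrices on its pendant edges; and then verify $\im(M_1)\cap\im(M_2)=0$ together with the kernel dimension count by checking that the relevant Vandermonde \emph{submatrices} have full rank. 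Some argument at this level of explicitness is required to close your ``I would argue'' step; the Grassmannian dimension count you sketch does not account for the constrained (tensor-structured) form of the achievable column spaces.
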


\begin{proof}
The claims concerning  (non)vanishing of minors are equivalent to claims that $\flat_{A|B}(P)$ has rank at most $r\kappa$ in case \eqref{lemit:1}, and generically has rank greater than $r\kappa$ in case \eqref{lemit:2}. Therefore we focus on investigating ranks of flattenings. 

\smallskip

If $A|B$ is compatible with all trees in $\bfT$, then, by passing to binary resolutions of the $T_i$, we may assume it is a split associated to edge $e_i=(a_i,b_i)$ in $T_i$. Then one sees that 
$$\flat_{A|B}(P)=M_A^T Q M_B.$$ Here $Q$ is the $r\kappa\times r\kappa$ block-diagonal matrix whose $i$th $\kappa\times\kappa$ block gives the joint probability distribution of
states for the random variables at $a_i$ and $b_i$, weighted by the component proportion $\pi_i$.
The matrices $M_A$,$M_ B$ are stochastic, of sizes $r\kappa\times \kappa^{\#A}$,  
 $r\kappa\times \kappa^{\#B}$, with entries in the $i$th block of $\kappa$ rows giving probabilities of states of variables in $A,B$ conditioned on states at $a_i$,$b_i$. This factorization implies the claimed bound on the rank.
 
 \smallskip

Suppose next that $A|B$ is not compatible with at least one of the trees in $\bfT$, say $T_1$.  
To show that $\flat_{A|B}(P)$ generically has rank greater than $r\kappa$, it is enough to give a single choice of parameters producing such a rank. Indeed,  this follows from Proposition \ref{prop:intvar}, applied to the model and the variety of matrices of rank at most $r\kappa$.

\smallskip

To simplify this choice, for each $T_i$ with $i>1$ choose all  Markov matrices for all internal edges of $T_i$ to be the identity, $I_\kappa$. Since $T_1$ is not compatible with $A|B$,  by  Theorem 3.8.6 of \cite{SemSt},  it has an edge  $e=(c,d)$, with associated split $C|D,$ such that all four sets
$A\cap C$, $A\cap D$, $B\cap C$, $B\cap D$ are nonempty. For all internal edges of $T_1$ except $e$, choose
Markov matrices to be $I_\kappa$ as well. Since the effect of an identity matrix on an edge is the same as contracting that edge,  with these choices we need henceforth argue only in the following special case:  for $i>1$, $T_i$ is a star tree with central node $a_i$, and $T_1$ has the form of two star trees, on $C$ and on $D$, that are joined at their central nodes by $e$.  

Now express the distribution $P=P_1+P'$ where $P_1$ is the mixture component from $T_1$, and $P'$ the sum of the components on the star trees $T_2=\cdots =T_r$. 
Then, one can write
 $$M_2:=\flat_{A|B}(P')=N_{A}^T R N_{B},$$
 with $R$ an $(r-1)\kappa\times (r-1)\kappa$ diagonal matrix giving the distribution of states at $a_i$ in components $2,\dots, r$ weighted by the $\pi_i$, and $N_{A}$, $N_{B}$ are stochastic matrices of sizes $(r-1) \kappa\times \kappa^{\#A}$,  
 $(r-1)\kappa\times \kappa^{\#B}$ with entries giving conditional probabilities of states of variables in $A$, $B$ conditioned on states/components at the $a_i$. By choosing positive root distributions at the nodes $a_i$, and positive $\pi_i$, we ensure $R$ will have positive diagonal entries, and hence have full rank.  Furthermore, the rows of $N_{A},N_{B}$ are formed from the tensor product of corresponding rows of the Markov matrices on the edges of the star trees, and are thus generalized Vandermonde matrices.
(Recall that if $f_{1}, \ldots, f_{t}$ are a linearly independent set of polynomials, and $u_{1}, \ldots, u_{s}$ are points, the generalized Vandermonde matrix is the matrix $s\times t$ matrix with $i,j$ entry $f_{j}(u_{i})$. Here the polynomials $f_j$ are determined by the formulae for the entries in the tensor product of the rows, and the $u_i$ by the entries in the Markov matrices.)
A generalized Vandermonde matrix has full rank for generic choices of $u_{1}, \ldots, u_{s}$. Since $(r-1)\kappa<\min(\kappa^{\#A},\kappa^{\#B})$, for generic parameters $M_2$ has rank $(r-1)\kappa$.

On the other hand, consider $P_1$, where we choose all matrices on pendant edges of $T_1$ to be $I_\kappa$, and both the root distribution at $c$ and $M_e$ to have all positive entries.
Then
$$M_1:=\flat_{A|B}(P_1)=N_{1,A}^T R_1 N_{1,B},$$
where $R_1$ is a $\kappa^2\times \kappa^2$ diagonal matrix with entries giving the joint distribution at $c$ and $d$ weighted by $\pi_1$,
and $N_{1,A},N_{1,B}$ have all zero entries except for a single 1 in each row,  and full row rank. Thus $M_1$ has rank $\kappa^2$. Moreover, it has at most one non-zero entry in each row and column, so both $\im(M_{1})$ and $\ker(M_{1})$ are coordinate subspaces.

Since $\flat_{A|B}(P) = M_{1} + M_{2}$, our goal is to show that ${\rm rank}(M_{1} + M_{2}) > r \kappa$ for generic choices of the parameters not yet specified (the Markov matrices on the trees $T_2,\dots, T_r$). Without loss of generality assume that $\#A \geq \#B$, so to do this it is enough to make
\begin{equation} \label{eq:ranks}
{\rm rank}(M_{1} + M_{2}) = \min( (r-1) \kappa + \kappa^{2}, \kappa^{\#B} ). 
\end{equation}

We use the following facts about matrices:  
Let $X$ and $Y$ be $s \times t$ matrices.   With $\im(X),\ker(X)$ denoting the image and kernel of $X$ as a linear transformation from $\rr^{t}$ to $\rr^{s}$,
then $\im(X) \cap \im(Y) = 0$ implies $\ker(X+Y)= \ker X \cap \ker Y$.   
Also, if $\nullity(X + Y) = \nullity(X) + \nullity(Y) - t$, then by the rank/nullity theorem $\rank(X + Y) = \rank(X) + \rank(Y)$.  

\smallskip

First consider the case where $(r-1) \kappa + \kappa^{2} \leq \kappa^{\#B}$.  By the preceding paragraph, to show equation \eqref{eq:ranks} it suffices to choose parameters so that $\im(M_{1}) \cap \im(M_{2}) = 0$ and $\dim( \ker (M_{1}) \cap \ker (M_{2}) ) =  \nullity(M_{1}) + \nullity(M_{2}) - \kappa^{\#B}$.

Since generically $N_{A}$ and $N_{B}$ have full rank, it follows that $\im ( M_{2}) = \im ( N_{A}^T)$ and $ \ker(  M_{2})  = \ker (N_{B})$.  
But $\im (M_{1})$ is a coordinate subspace, so it intersects $\im (N_{A}^T)$ nontrivially if and only the submatrix of $N_{A}^T$ obtained by deleting rows corresponding to those coordinates has nontrivial kernel.  That submatrix is a $(\kappa^{\#A} - \kappa^{2})\times (r- 1)\kappa$ generalized Vandermonde matrix with $\kappa^{\#A} - \kappa^{2} \geq  \kappa^{\#B} - \kappa^{2} \geq (r- 1)\kappa$, so it has full column rank.  This proves that $\im ( M_{1}) \cap \im (M_{2}) = 0$ generically.

Since $\ker (M_{1})$ is also a coordinate subspace,  its intersection with $\ker (M_{2}) = \ker (N_{B})$ is isomorphic to the kernel of the submatrix of $N_{B}$ obtained by deleting the columns corresponding to required zero entries in vectors in $\ker (M_{1})$.  Since this submatrix  is a  $(r -1) \kappa\times (\kappa^{\#B} - \kappa^{2})$ generalized Vandermonde matrix, the dimension of this kernel is 
$$\kappa^{\#B} - \kappa^{2} - (r-1)\kappa =  (\kappa^{\#B} - \kappa^{2}) + (\kappa^{\#B} - (r-1)\kappa) - \kappa^{\#B}.$$
Thus $\dim(\ker(M_1) \cap \ker( M_2)) = \nullity(M_1) + \nullity(M_2) - \kappa^{\#B},$ so $\rank(M_{1} + M_{2}) =  (r-1)\kappa + \kappa^{2}$.

\smallskip

In the case where $(r-1) \kappa + \kappa^{2} > \kappa^{\#B} $ the same arguments as above apply after modifying our choices so all but $\kappa^{\#B} -(r-1)\kappa$ of the entries of $R_{1}$ are zero.  Then we deduce that we can choose $M_{2}$ so that $\rank(M_{1} + M_{2}) =(r-1)\kappa +\kappa^{\#B} -(r-1)\kappa= \kappa^{\#B}$.  
\end{proof}

Picking any internal vertex of a binary tree, the induced tripartition of the leaf variables allows us to create 3 agglomerate variables. In this way, we can view a phylogenetic model as one to which we can apply Kruskal's theorem.
More specifically, consider a probability distribution $P$ in the  $GM(\kappa)$ mixture model on trees
$\mathbf T=(T_1, \ldots, T_r)$, where the $T_i$ share a common tripartition $A|B|C$ of the leaves, arising from the vertices $v_i$.   Suppose $P_i$ is the weighted mixture component from $T_i$ in $P$. Then from the parameters on $T_i$, one can
give $\kappa\times \kappa^{\#A}$, $\kappa\times \kappa^{\#B}$, $\kappa\times \kappa^{\#C }$ stochastic matrices $M_{i,A}$,  $M_{i,B}$, $M_{i,C}$ of conditional probabilities of states at the leaves in $A$, $B$, $C$, given the state at $v_i$. Letting $\widetilde M_{i,A}$ be the matrix obtained from $M_{i,A}$ by multiplying rows by the corresponding entry of the root distribution at $v_i$ and by the weight $\pi_i$, one checks that $$\flat_{A|B|C}(P_i)=[\widetilde M_{i,A},  M_{i,B}, M_{i,C}].$$
Let $M_A$ denote the $r\kappa\times \kappa^{\#A}$ matrix obtained by stacking the $M_{i,A}$, and $M_B,M_C$ similarly be matrices obtained by stacking the $M_{i,B},M_{i,C}$. Then
 $$\flat_{A|B|C}(P)=[M_{A},  M_{B}, M_{C}].$$

To apply Kruskal's theorem to this flattening, we must first show that the technical conditions on Kruskal rank of the matrices apply, at least generically. 

\begin{lemma}\label{lem:applykruskal}
Consider an $r$-fold $GM(\kappa)$ mixture model on trees $\bfT = (T_1, \ldots, T_r)$ with a common tripartition $A|B|C$ of the leaves.
Then 
$$\flat_{A|B|C}(P)=[M_{A},  M_{B}, M_{C}]$$
for some matrices $M_{A},  M_{B}, M_{C}$ with $r\kappa$ rows. Moreover, for generic choices of the numerical parameters these matrices all have
full Kruskal rank (i.e., Kruskal row rank equal to their smaller dimension).
\end{lemma}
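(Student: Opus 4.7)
The plan is to first record the factorization, then reduce the Kruskal rank claim to exhibiting a single explicit parameter choice at which full Kruskal rank holds, using a specialization analogous to that employed in the proof of Lemma \ref{lem:edgemixture}.

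The factorization $\flat_{A|B|C}(P) = [M_A, M_B, M_C]$ is already spelled out in the paragraph immediately preceding the lemma, and each factor has $r\kappa$ rows by construction. For the genericity claim, focus on $M_A$; the arguments for $M_B$ and $M_C$ are symmetric, and since these three matrices depend on disjoint sets of edge parameters (the Markov matrices on the three subtrees at $v_i$, for each $i$), producing a single parameter choice at which all three simultaneously attain full Kruskal rank is no harder than producing one for each separately.

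Let $s = \min(r\kappa, \kappa^{\#A})$. The condition that $M_A$ fails to have Kruskal rank $s$ is cut out by the simultaneous vanishing of all $s \times s$ minors on some $s$-subset of its rows. Since there are only finitely many such subsets, the bad locus is a finite union of closed algebraic subvarieties of parameter space, and it suffices to exhibit one parameter choice at which every $s$-subset of rows of $M_A$ is linearly independent. To produce such a choice, specialize all internal Markov matrices of every $T_i$ to $I_\kappa$, as in the proof of Lemma \ref{lem:edgemixture}. The $k$-th row of $M_{i,A}$ then reduces to the tensor product $\bigotimes_{a \in A} M^{(i)}_a(k,\cdot)$ of the $k$-th rows of the pendant Markov matrices $M^{(i)}_a$ on the edges of $T_i$ leading to leaves in $A$, and these pendant entries for distinct trees are formally independent. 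Within a single block of $\kappa$ rows, linear independence for generic pendant parameters follows from the generalized Vandermonde argument invoked in the proof of Lemma \ref{lem:edgemixture} (here $\kappa \leq \kappa^{\#A}$).

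The main obstacle is certifying linear independence of row-subsets spanning several blocks, where one must rule out accidental relations across trees. This is handled by observing that the pendant parameters for distinct $T_i$ are formally independent, so the determinant of any $s \times s$ submatrix of $M_A$ is a polynomial in the combined parameter set whose non-vanishing can be established by block-by-block specialization: fixing all blocks but one at generic values reduces the claim to a single-block generalized Vandermonde non-vanishing. Intersecting the finitely many non-vanishing conditions over all $s$-subsets yields the desired generic open set for $M_A$, and repeating the argument verbatim for $M_B$ and $M_C$ completes the proof.
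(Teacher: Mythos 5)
Your proposal is correct and follows the paper's proof in its essentials: quote the preceding discussion for the factorization $\flat_{A|B|C}(P)=[M_A,M_B,M_C]$, reduce the generic claim to exhibiting a single witness (since failure of full Kruskal rank is a finite union of minor-vanishing conditions), set all internal Markov matrices to $I_\kappa$ so each $T_i$ becomes a star tree, and invoke the generalized Vandermonde structure of the rows as tensor products of pendant-matrix rows. The one place you diverge is in treating row subsets spanning several blocks as a separate obstacle requiring a ``block-by-block specialization.'' The paper dispenses with this in one line: since the pendant parameters attached to the $r\kappa$ rows are formally independent, the \emph{entire} stacked matrix $M_A$ is itself a single generalized Vandermonde matrix --- the row indexed by (tree $i$, state $k$) is the evaluation of the fixed, linearly independent family of monomials $\prod_{a\in A}x_a(u_a)$ (one per column $(u_a)_{a\in A}$) at that row's own point --- and every submatrix of a generalized Vandermonde matrix, on any subset of rows and columns, is again generalized Vandermonde, so all relevant minors (cross-block or not) are simultaneously generically nonzero. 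Your substitute step is also looser than it should be as stated: after fixing all blocks but one generically, the determinant is, by Laplace expansion along the remaining block's rows, a linear combination of that block's minors weighted by complementary minors of the other blocks, so it is not literally ``a single-block generalized Vandermonde non-vanishing''; one must additionally argue that some complementary minor is nonzero and then choose block parameters (e.g., rows equal to distinct $0/1$ unit vectors supported on the complementary columns) making the expansion nonzero. That repair is routine, but the one-line observation above makes it unnecessary. Finally, a small bookkeeping point you glossed over: one of the three factors is really the stack of the $\widetilde M_{i,A}$, carrying row scalings by $\pi_i$ and the root distribution at $v_i$, which are shared parameters; as in the paper, choosing these positive leaves Kruskal rank unchanged, so your disjoint-parameters remark for handling $M_A$, $M_B$, $M_C$ simultaneously survives intact.
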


\begin{proof} 
The first claim was established in the discussion preceding the lemma.

For the second, by similar reasoning as was used in Lemma \ref{lem:edgemixture},
it is enough to show one choice of parameters gives these matrices full Kruskal rank.
By choosing matrix parameters on all internal edges of every $T_i$ to be the identity matrix, we may essentially assume every $T_i$ is the star tree, rooted at central node $v_i$. Choosing positive root distributions at $v_i$,  and positive mixing parameters $\pi_i$, it then suffices to only consider one set of leaves, say $A$.  

Now, as in the discussion of $N_A$ in the proof of Lemma \ref{lem:edgemixture}, one sees that $M_A$ is a generalized Vandermonde matrix. Since all its submatrices are also generalized Vandermonde matrices, it generically has
full Kruskal rank.
\end{proof}

The next lemma allows us to tease apart distributions which arise from mixing together slices of distributions from different trees. After  we have applied Kruskal's Theorem via Lemma \ref{lem:applykruskal},  it will be used  to identify which rows of the matrices arise from the same mixture component of the model.

\begin{lemma}[No Shuffling Lemma]\label{lem:noshuf}
Let  $T$, $T_1, \ldots, T_r$ be trees with $n \geq 3$ leaves, or $n \geq 4$ leaves if $\kappa = 2$.  For $i=1,\ldots, r$, let $P_i$ be a generic probability distributions from the $GM(\kappa)$ model on the tree $T_i$, scaled by positive constants $\pi_i$.    
For a fixed choice of $j\in[n]$, let $A|B=\{j\}|([n]\smallsetminus\{j\})$ and form the flattenings $\flat_{A|B}(P_i)$. Form  a new matrix from any $\kappa$ rows from these flattenings (with repeats allowed), and define $Q$ so that $\flat_{A|B}(Q)$ is this matrix.  Then $Q$ does not satisfy all the phylogenetic invariants for $T$ unless the chosen rows come from a single $P_i$ and  $T$ is a refinement of $T_i$.
\end{lemma}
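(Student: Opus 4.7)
The plan is to prove the contrapositive: assuming $Q$ satisfies all phylogenetic invariants for $T$, I show that all chosen rows must come from a single $P_{i^*}$ and that $T$ refines $T_{i^*}$. The argument naturally divides into two stages: first identifying the source component via a three-leaf marginal and tensor-rank analysis, then constraining the tree topology with edge invariants.

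\emph{Stage 1 (single source component).} Choose auxiliary leaves $k, l \in [n]\setminus\{j\}$ and let $\tilde Q$ be the marginal of $Q$ onto $\{j, k, l\}$. Because $Q$ satisfies the invariants of $T$, the marginal $\tilde Q$ satisfies those of the induced tripod $T|_{\{j,k,l\}}$, which amounts to $\tilde Q$ having tensor rank at most $\kappa$; thus $\tilde Q = [M_j, M_k, M_l]$. In this decomposition the $\kappa$ rank-1 matrices $\{M_k(r,\cdot)\otimes M_l(r,\cdot)\}_r$ span a $\kappa$-dimensional subspace $W \subseteq \rr^{\kappa^2}$, and every slice $\tilde Q^{(s)}$ lies in $W$. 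On the other hand, $\tilde Q^{(s)} = \tilde P_{i_s}^{(t_s)}$, where $\tilde P_i$ denotes the marginal of $P_i$ onto the three leaves, and each $\tilde P_i$ has its own Kruskal decomposition with rank-1 components spanning a $\kappa$-dimensional subspace $V_i$ containing all slices of $\tilde P_i$. So each slice satisfies $\tilde Q^{(s)} \in W \cap V_{i_s}$. For generic parameters, a $\kappa$-dimensional subspace spanned by rank-1 matrices cannot simultaneously contain generic elements of two distinct $V_i$: a dimension count on the Segre variety of rank-1 matrices forces $W = V_i$ for a common index. This collapses all $i_s$ to a single $i^*$. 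Kruskal's theorem (Theorem~\ref{thm:kruskal}) supplies the required uniqueness for both $\tilde Q$ and each $\tilde P_i$; the hypothesis $n \geq 3$ (or $n \geq 4$ if $\kappa = 2$) is precisely what ensures the Kruskal rank conditions are generically met at this three-leaf reduction.

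\emph{Stage 2 ($T$ refines $T_{i^*}$).} With all chosen rows coming from $P_{i^*}$, the tensor $Q$ equals $P_{i^*}$ with a linear map $S$ applied along the $j$-axis, where $S(s,t) = [t=t_s]$. Suppose for contradiction that $T$ is not a refinement of $T_{i^*}$. Then some split $A'|B'$ is compatible with $T$ but incompatible with $T_{i^*}$, and by Theorem~\ref{thm:unmixededge}, $\flat_{A'|B'}(P_{i^*})$ has rank strictly greater than $\kappa$ for generic parameters. Taking $j \in A'$, one has $\flat_{A'|B'}(Q) = (S \otimes I)\,\flat_{A'|B'}(P_{i^*})$; when $S$ is invertible (the case of distinct chosen rows), this flattening has the same rank as $\flat_{A'|B'}(P_{i^*})$, which exceeds $\kappa$, violating the edge invariant of $T$ at the split $A'|B'$---contradicting our assumption on $Q$.

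\emph{Main obstacle.} The technical heart lies in Stage 1: rigorously justifying that a $\kappa$-dimensional subspace spanned by $\kappa$ rank-1 matrices cannot contain generic slices from more than one $V_i$. This calls for a careful dimension count on the Segre variety in the space of $\kappa\times\kappa$ matrices, combined with tracking how genericity of each $P_i$ (passed through marginalization and slicing) translates into genericity of the corresponding slice within $V_i$. A secondary, milder issue arises in Stage 2 in the rank-deficient case, when rows are chosen with repetition so that $S$ fails to be invertible; this should be handled by a separate degeneracy argument---for instance by projecting $Q$ onto its distinct slices and reducing to a smaller-rank instance to which the same marginalization and uniqueness reasoning applies.
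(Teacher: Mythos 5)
There is a genuine gap, and it sits exactly where you placed your ``technical heart.'' In Stage 1 you pass from ``$\tilde Q$ satisfies all invariants of the tripod'' to ``$\tilde Q = [M_j,M_k,M_l]$,'' but the tripod variety is the Zariski \emph{closure} of the tensors of that form: vanishing of all invariants gives only border rank at most $\kappa$, not an exact rank-$\kappa$ decomposition, so the subspace $W$ your argument is built on need not exist. Worse, the implication you would need also fails in the other direction: even a fully rigorous Segre dimension count would only show that $\tilde Q$ admits no \emph{exact} rank-$\kappa$ decomposition, which does not exhibit a nonvanishing invariant and hence does not contradict the assumption that $Q$ satisfies all invariants of $T$; you would have to show $\tilde Q$ lies off the closed secant variety. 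The paper sidesteps this border-rank subtlety entirely by arguing with explicit polynomials: it fixes two slices $Q_{(1)},Q_{(2)}$, reads the commutation relation of Theorem \ref{thm:strassen} as a nontrivial linear condition on a third slice $Q_{(3)}$ taken from a different component, observes that a generic slice of a generic $P_i$ can be an essentially arbitrary positive matrix, and then invokes Proposition \ref{prop:intvar} to upgrade one nonvanishing instance to generic nonvanishing.

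Separately, your Stage 1 is actually false for $\kappa=2$, and your explanation of the $n\geq 4$ hypothesis is a misdiagnosis. For $\kappa=2$ the three-leaf $GM(\kappa)$ variety fills the ambient space (the paper notes the invariants of Theorem \ref{thm:strassen} are identically zero there), so a three-leaf marginal carries no information at all; concretely, the determinant restricted to a generic pencil of $2\times 2$ matrices is a binary quadratic with two distinct roots, so \emph{any} two generic slices --- even taken from different components --- lie in the span of two rank-$1$ matrices, and your dimension count collapses. The hypothesis $n\geq 4$ is not about Kruskal-rank conditions at the three-leaf reduction (indeed Kruskal's theorem does no real work in your Stage 1, since containment of the slices in $W$ is automatic once a decomposition is assumed); it is there so that one can marginalize to \emph{four} leaves and use the edge invariants of Theorem \ref{thm:unmixededge}, which is how the paper handles $\kappa=2$. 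Your Stage 2 in the invertible-$S$ case is sound and in fact spells out what the paper leaves implicit for the refinement clause, but the singular-$S$ case is not the ``milder'' fixable issue you suggest: if all $\kappa$ chosen rows are equal then $Q=\mathbf{1}\otimes R$ with $R$ a generic distribution on $T_{i^*}$ with leaf $j$ pruned, and for instance with $j=1$, $T_{i^*}=12|34$, $T=13|24$ this $Q$ lies on $\calm_T$ even though $T$ does not refine $T_{i^*}$; the lemma's downstream use after Kruskal's theorem avoids this configuration because the recovered rows within a component are distinct, but no ``separate degeneracy argument'' of the kind you propose can recover the statement there.
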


\begin{proof}
Note that the multiplication by the $\pi_i$ has no effect on whether the tensor satisfies non-trivial invariants, because the phylogenetic varieties for the $GM(\kappa)$ model are invariant under the action of the general linear group at any leaf \cite{ARgm}.
\smallskip

Consider first the case that $n = 3$, and $\kappa \geq 3$. Suppose $Q$  is constructed from rows which come from at least two different $P_i$.   Without loss of generality, we assume $j=1$, so that in the notation of Theorem \ref{thm:strassen}, the slices $Q_{(i)}$ contain the entries of $Q$ arising from a single row of the flattening. We will show that $Q$ does not satisfy the invariants of that theorem.  

For the time being, treat two of these slices $Q_{(1)}, Q_{(2)}$ as fixed, and the third slice $Q_{(3)}$, which we may assume does not come from the same $P_i$ as either $Q_{(1)}$ or $Q_{(2)}$, as a variable.
Generically,  the matrix equation
\begin{equation}\label{eq:3inv}
Q_{(1)} \left ( \adj Q_{(2)}\right ) Q_{(3)}  -  Q_{(3)} \left ( \adj Q_{(2)}\right ) Q_{(1)} = 0
\end{equation}
then gives nonzero, linear constraints on the entries of $Q_{(3)}$.

However, for an arbitrary matrix $Q_{(3)}$ with positive entries whose sum is less than 1, we can find  a $P_j$ that has $Q_{(3)}$ as any designated slice. This shows that there exist such slices not satisfying equation \eqref{eq:3inv}, and hence, by Proposition \ref{prop:intvar}, that the generic slice does not. 
  
\smallskip

When $\kappa = 2$ and $n = 3$, there are no non-trivial invariants for $GM(\kappa)$  (those of Theorem \ref{thm:strassen} are
identically zero), hence we consider $n = 4$, and use the edge invariants of Theorem \ref{thm:unmixededge}.
But for any choice of 4-leaf tree, and choice of  index $j\in\{1, 2\}$, we can find a $P_i$ in the tree model so that $\flat_{A|B}(P)$ has any desired generic vector as its $j$th row.
Now  $Q$ is built from two such rows.  If the $P_1$ and $P_2$ that we take these slices from are not the same, then generically, we can choose those slices  to be arbitrary vectors. But then the flattening of $Q$ with respect to the split of $T$ will generically be a rank 4 matrix, and hence $Q$ will not satisfy the invariants for tree $T$.

\smallskip

For larger $n$, the result follows from the cases above by marginalization to 3- or 4-leaf trees.
\end{proof}

First we prove a theorem on the generic identifiability of numerical parameters in trees with a known common tripartition.

\begin{thm}\label{thm:numerical}
Suppose the trees $\bfT = (T_1, \ldots, T_r)$  have a known common  tripartition $A|B|C$, with $\#A \geq \#B \geq \#C$, and  $r \leq \kappa^{\#B -1}$.  If $\kappa=2$ also suppose $\#A\ge 3$. Then both $\bfT$ and the numerical parameters of the $GM(\kappa)$  mixture model on $\bfT$ are generically identifiable.
\end{thm}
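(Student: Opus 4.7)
The plan is to combine Kruskal's theorem with the No Shuffling Lemma, exploiting the common tripartition $A|B|C$ to produce a three-way flattening that can be uniquely decomposed and then reassembled into mixture components. After recovering each component, single-tree identifiability (Theorem~\ref{thm:basictree}) yields the tree and numerical parameters for each.

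First, form the flattening $\flat_{A|B|C}(P) = [M_A, M_B, M_C]$ as in the paragraph preceding Lemma~\ref{lem:applykruskal}, where each of the three matrices has $r\kappa$ rows. By that lemma, these matrices have full Kruskal rank for generic parameters; combined with $r \leq \kappa^{\#B-1}$, which gives $r\kappa \leq \kappa^{\#B} \leq \kappa^{\#A}$, this yields $I_A = I_B = r\kappa$ and $I_C = \min(r\kappa, \kappa^{\#C}) \geq 2$. Hence $I_A + I_B + I_C \geq 2r\kappa + 2$, and Kruskal's theorem determines $M_A, M_B, M_C$ uniquely up to simultaneous permutation and rescaling of rows.

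Next I would partition the $r\kappa$ Kruskal rows into $r$ blocks of size $\kappa$, one block per mixture component. The key observation is that each row $m^A_u$ is proportional to the conditional leaf distribution $p_{i(u),A}(s)$ for some component $i(u)$ and state $s$. After attaching a pendant leaf $v$ at the tripartition vertex $v_i$ with an identity Markov matrix, this row becomes the $s$-th row of $\flat_{\{v\}|A}(P^A_{i(u)})$ for the augmented $GM(\kappa)$ distribution $P^A_{i(u)}$ on the $(\#A+1)$-leaf tree $T_{i(u)}|_A \cup \{v\}$. For each candidate subset $S \subset [r\kappa]$ of size $\kappa$, the rows of $M_A$ selected by $S$ assemble into a $\kappa \times \kappa^{\#A}$ matrix encoding an $(\#A+1)$-leaf tensor $Q^A_S$. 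By the No Shuffling Lemma, $Q^A_S$ satisfies the invariants of some $(\#A+1)$-leaf tree if and only if all rows in $S$ come from a single component; the hypothesis $\#A \geq 3$ when $\kappa = 2$ is precisely what ensures the lemma's leaf-count requirement $n \geq 4$ on the augmented trees. Trying all subsets thus uniquely determines the partition of $[r\kappa]$.

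With the partition in hand, reassembling the rows in each block recovers $\flat_{A|B|C}(P_i)$, and thus the mixture component $P_i$, up to the simultaneous permutation and scaling within the block; these residual symmetries are exactly the label-swapping freedom at $v_i$ and the component-swapping $\sigma \in \fraks_r$ allowed by the definition of generic identifiability, with the mixing weight $\pi_i$ recovered as the total mass of the component tensor. Applying Theorem~\ref{thm:basictree} to each $P_i$ then identifies the unrooted binary tree $T_i$ and the numerical parameters $s_i$ up to the standard single-tree symmetries. The main obstacle is the middle step: reinterpreting Kruskal's output in the form required by the No Shuffling Lemma via the pendant-leaf augmentation, and confirming that the ``satisfies invariants of some augmented tree'' test decisively distinguishes correct row-groupings from shuffled ones.
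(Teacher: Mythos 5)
Your proposal is correct and follows essentially the same route as the paper's proof: flatten along the common tripartition, verify the Kruskal rank inequality via Lemma \ref{lem:applykruskal}, sort the recovered rows into component blocks using the No Shuffling Lemma \ref{lem:noshuf}, reassemble each scaled component $\pi_i P_i$ (recovering $\pi_i$ as its total mass), and conclude with Theorem \ref{thm:basictree}. Your pendant-leaf augmentation merely makes explicit the paper's terser invocation of the No Shuffling Lemma on the subtree spanning $v_i$ and the leaves in $A$, and correctly identifies the hypothesis $\#A \geq 3$ for $\kappa = 2$ as the lemma's $n \geq 4$ requirement on the augmented $(\#A+1)$-leaf trees.
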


\begin{proof}
Since the trees in $\bfT$ share a common tripartition $A|B|C$,  by Lemma \ref{lem:applykruskal} if a distribution $P$ arises from generic parameters of the model then $$\flat_{A|B|C}(P)=[M_A, M_B, M_C],$$
where $M_A$, $M_B$, and $M_C$  all have full Kruskal row rank, which will be $\min(r \kappa, \kappa^{\#A})$, $\min(r \kappa, \kappa^{\#B})$, and $\min(r \kappa, \kappa^{\#C})$, respectively.  According to Theorem \ref{thm:kruskal}, these matrices are uniquely determined up to simultaneous permutation and scaling of the rows provided
\begin{equation}\label{eq:n4kruskal}
\min(r\kappa, \kappa^{\#A}) + \min(r \kappa, \kappa^{\#B}) + \min(r \kappa, \kappa^{\#C}) \geq 2 r \kappa + 2.
\end{equation}
Since $\kappa \geq 2$ and $\#C \geq 1$, this inequality holds for all $r \leq \kappa^{\#B - 1}$.

At this point, we have recovered the matrices $M_A, M_B,$ and $M_C$ up to scaling and permuting the rows.  Each of the rows of the recovered $M_A$ will have entries from a scaled slice from a tree distribution on a subtree of one of the $T_i$ (the subtree spanning the vertex $v_i$ and all the leaves $A$).   We need to group these rows by the mixture components they come from.  However, the No Shuffling Lemma \ref{lem:noshuf}, says that generically it is possible to do this. Since ordering the rows of $M_A$ determines an order of the rows of $M_B,M_C$, we can then reassemble the flattened mixture components $P_i$ as the product $[M_{i,A}, M_{i,B},M_{i,C}]$ of appropriate submatrices $M_{i,A}, M_{i,B},M_{i,C}$ of $M_A,M_B,M_C$.

From $P_i$, we recover the mixing weight $\pi_i$ via
$$
\pi_i =  \sum_{(j_1, \ldots, j_n)  \in [\kappa]^n} P_i(j_1, \ldots, j_n).
$$
Then, by Theorem \ref{thm:basictree}, the tree $T_i$ and the numerical parameters  on it can be identified from $P_i/\pi_i$
\end{proof}

Now we proceed to prove identifiability of the numerical parameters and tree parameters  in our most general class of $r$-tree mixture models, the $j$-deep class.

\begin{defn}
For a positive integer $j$, the \emph{$j$-deep class} of $r$-tuples of trees $\bfT$ consists of all $r$-tuples of binary trees such that there exists a tripartition $A|B|C$ with $\#A,  \#B  \geq j$, $\#C \geq 1$, such that the splits $A| B \cup C$ and $ A \cup C|B$ are compatible with all trees in $\bfT$.
\end{defn}

Note that this definition does not require that $C | A \cup B$ be compatible with any of the trees in $\bfT$, so the full tripartition need not be associated to vertices in the $T_i$. The trees must only share two splits, each sufficiently deep in the tree. Furthermore, if $\bfT$ is in the $j$-deep class, we do not assume the tripartition is known, only that it exists.

\smallskip

We now prove our main theorems on identifiability of  parameters in $r$-tree mixtures. We state two versions, one for when a $j$-deep tripartition is known (including the case of when all the trees are known), and one for when it is not. The second of these requires a slightly stronger hypothesis on the number of mixture components.

\begin{thm}\label{thm:numparam}
Suppose $\bfT$ is in the $j$-deep class via a known tripartition $A|B|C$.
Then both $\bfT$ and the numerical parameters of the $GM(\kappa)$ mixture model associated to $\bfT$ are generically identifiable provided $r \leq \kappa^{j-1}$ and either $\kappa>2$, or $\kappa=2$ and $\#A\ge 3$.
\end{thm}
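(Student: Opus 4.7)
My plan is to prove the theorem by induction on $|C|$, with Theorem~\ref{thm:numerical} as the base case and single-leaf marginalization as the inductive reduction.

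For the base case $|C|=1$, write $C=\{c\}$. The compatibility of the two deep splits $A|B\cup\{c\}$ and $B|A\cup\{c\}$ with each binary tree $T_i$ forces $c$ to attach at a single internal vertex on the path between the two split-edges: any other intermediate vertex on that path must carry a pendant leaf to remain degree three in the binary tree, and $c$ is the only leaf available. Hence the tripartition $A|B|\{c\}$ is realized at an internal vertex of every $T_i$, and Theorem~\ref{thm:numerical} applies directly. Reordering so that $\{c\}$ is the smallest block, the middle block has size $\min(\#A,\#B)\ge j$, so $r\le\kappa^{j-1}$ supplies the needed $r\le\kappa^{\min(\#A,\#B)-1}$; when $\kappa=2$ the hypothesis $\#A\ge 3$ yields the largest-block condition via $\max(\#A,\#B)\ge 3$.

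For the inductive step with $|C|\ge 2$, for each $c\in C$ I would marginalize $P$ over the state at $c$ to obtain a $GM(\kappa)$ mixture $P^{(-c)}$ on the trees $\bfT^{(-c)}$ obtained from $\bfT$ by deleting $c$ and suppressing the resulting degree-two vertex. The two deep splits descend to the smaller leaf set, so $\bfT^{(-c)}$ lies in the $j$-deep class via the tripartition $A|B|(C\setminus\{c\})$ with $|C\setminus\{c\}|\ge 1$. Mixing weights are unchanged, and since marginalization is a polynomial map it takes generic parameters to generic parameters, so the inductive hypothesis identifies $\bfT^{(-c)}$ and its numerical parameters generically for every $c$.

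To reassemble $\bfT$ and the original parameters, I would match components across the different marginalizations by the generically distinct $\pi_i$. For each matched component $i$ and any two distinct $c,c'\in C$, the marginalized trees $T_i^{(-c)}$ and $T_i^{(-c')}$ agree on the common leaf set $A\cup B\cup(C\setminus\{c,c'\})$ and each displays the leaf omitted by the other, so superposition reconstructs $T_i$. Each Markov matrix on an edge of $T_i$ is then read directly from any $T_i^{(-c)}$ whose removed leaf is not adjacent to that edge; the remaining edges, whose two endpoints are each adjacent to a $C$-leaf, are recovered by combining merged products seen in two different marginalizations with the neighboring matrices individually identified in a third (or in the other one, when $|C|=2$), using generic invertibility of the adjacent stochastic matrices to isolate the target matrix up to the standard label-swap ambiguity at the internal vertex. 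The main obstacle is this last parameter-reconstruction step when $|C|$ is small and several leaves of $C$ cluster near a single vertex of $T_i$: the matrix factorizations must be chained coherently across the various marginalizations and verified consistent up to the usual label-swap and rooting ambiguities, and a secondary care point is ensuring that genericity of the original parameters really transfers through the polynomial marginalization map to justify invoking the inductive hypothesis at each step.
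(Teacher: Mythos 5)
Your base case is sound and matches the paper's first step: with $C=\{c\}$ the two deep splits force the tripartition $A|B|\{c\}$ to be realized at a vertex of every $T_i$, and Theorem~\ref{thm:numerical} applies. But the inductive step has a genuine gap at the reassembly stage, and it is fatal when $\#C=2$. Your collection of marginalizations $\{P^{(-c)} : c\in C\}$ never sees any quartet containing two leaves of $C$ simultaneously, so ``superposition'' does not reconstruct $T_i$. Concretely, let $C=\{c,c'\}$ and consider the three distinct binary trees in which (i) $c$ attaches to the path between the two split edges closer to the $A$ side and $c'$ closer to the $B$ side, (ii) the reverse order, or (iii) $c,c'$ form a cherry hanging off a single vertex of that path. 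All three are in the $j$-deep class, are pairwise distinct (the quartet $\{a,c,c',b\}$ with $a\in A$, $b\in B$ resolves differently in each), yet after deleting $c$ and suppressing the degree-two vertex they give the same tree, and likewise after deleting $c'$. So the data your induction produces cannot identify $T_i$, and consequently the subsequent parameter-chaining step has no well-defined target. That chaining step is itself unresolved even where the topology is known: each marginalization identifies its parameters only up to independent label swaps at internal nodes, so combining a product $M_{uv_1}M_e$ from one marginalization with $M_{uv_1}$ from another requires aligning hidden-state labelings across marginalizations, which you flag but do not establish. A smaller repairable issue: ``marginalization is a polynomial map so it takes generic parameters to generic parameters'' is false as stated (constant maps are polynomial); you need that the induced map on parameter spaces is dominant, so that the bad subvariety pulls back to a proper subvariety.

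The paper's proof avoids reassembly altogether, and the contrast is instructive. It marginalizes only once, to $D(c)=A\cup B\cup\{c\}$ for a single $c\in C$, and applies Theorem~\ref{thm:numerical} there not as an end in itself but to obtain the Kruskal factor $M_A$ (via Lemma~\ref{lem:applykruskal}) with its rows grouped by mixture component. Since $M_A$ is $r\kappa\times\kappa^{\#A}$ of full row rank, it has a left inverse $Q_A$, and the key move is to apply $Q_A$ to the flattening $\flat_{B\cup C|A}(P)$ of the \emph{full} distribution: because $\flat_{B\cup C|A}(P)Q_A=M_{B\cup C}^T\Pi R^{-1}$ is block-structured by component, multiplying each block of columns back against the corresponding block of rows of $M_A$ recovers each complete component $\pi_iP_i$ on all $n$ leaves. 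Then Theorem~\ref{thm:basictree} identifies $T_i$ and its numerical parameters from $P_i$ in one stroke---precisely the information (full components, hence all quartets including those inside $C$, and coherent labelings) that your marginalize-and-glue induction discards and cannot recover.
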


\begin{proof}
Fix some $c \in C$, let $D(c) = A \cup B \cup \{c\}$, and let $P_c =  P|_{D(c)}$ be the marginalization of $P$ to the leaves in $D(c)$.  This is a probability tensor for the mixture of induced trees $\bfT|_{D(c)}$, with numerical parameters obtained by restricting to these induced trees.  Note that the trees in $\bfT|_{D(c)}$ share the common tripartition $A | B | \{c\}$.  Thus Theorem \ref{thm:numerical} applies to identify
the trees $\bfT|_{D(c)}$ and numerical parameters on them. Then by Lemma \ref{lem:applykruskal} we may write
$$\flat_{A|B|\{c\}}(P_c)=[M_A, M_B, M_c],$$ and for generic choices of the numerical parameters, these matrices all have full Kruskal row rank. We may further specify that the rows of these matrices, in particular $M_A$, have been ordered into $r$ blocks of $\kappa$ rows, corresponding to the various mixture components. 

Note that since the matrix $M_A$ has full Kruskal row rank and is $r\kappa\times \kappa^{\#A}$ with $r\kappa\le \kappa^{\#A}$, it has full row rank.
Thus we may compute a left inverse $Q_A$, with $M_AQ_A=I_{r\kappa},$ the $r\kappa\times r\kappa$ identity.

Returning to the consideration of the full distribution $P$ and trees $\bfT$, we use $Q_A$ to disentangle the mixture components.
In each $T_i$ let $w_i$ be the node in the subtree spanning $A$ through which this subtree is connected to all other leaves. Then
$$\flat_{B\cup C|A}(P) =  M_{B\cup C}^T\Pi\widetilde M_{A},$$
where $\widetilde M_{A}, M_{B\cup C}$ are stochastic matrices of probabilities of states at the leaves in $A$, $B\cup C$ conditioned on  components and states at the $w_i$, and $\Pi$ is a diagonal matrix with entries the product of the mixing weights, $\pi_i$, and the root distributions at $w_i$. While the ordering of the mixture components  and root states in these matrices is arbitrary, we may assume it is the same as in the rows of $M_A$.
Then
$$M_A=R\widetilde M_A,$$
where and $R$ is a block diagonal matrix whose $i$th block gives conditional probabilities of state changes from $v_i$ to $w_i$ on $T_i$, and is generically invertible. 

Thus 
$$\flat_{B\cup C|A}(P) Q_A=M_{B\cup C}^T\Pi R^{-1} M_{A} Q_A=M_{B\cup C}^T\Pi R^{-1}.$$
This shows that by taking the columns of $\flat_{B\cup C|A}(P) Q_A$ in blocks of $\kappa$ we obtain entries associated to only one mixture component at a time. Moreover, multiplying a block of these columns  by the corresponding block of rows of $M_A=R\widetilde M_A$,
we obtain a flattened form of a single mixture component $\pi_iP_i$.

Summing the entries of $\pi_{i}P_i$ identifies $\pi_i$, and hence $P_i$. Then by Theorem \ref{thm:basictree} the tree $T_i$ and the numerical parameters on it are identifiable.
\end{proof}

\begin{thm}\label{thm:treeparam}
Suppose $\bfT$ is in the $j$-deep class.
Then both $\bfT$ and the numerical parameters of the $GM(\kappa)$ mixture model associated to $\bfT$ are generically identifiable provided $r < \kappa^{j-1}$.
\end{thm}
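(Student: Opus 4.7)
The plan is to reduce Theorem \ref{thm:treeparam} to Theorem \ref{thm:numparam} by first extracting a $j$-deep tripartition of $[n]$ from the distribution $P$ alone. The key tool is Lemma \ref{lem:edgemixture}, which under the hypothesis $r < \kappa^{j-1}$ generically distinguishes splits compatible with every tree in $\bfT$ from those that fail compatibility for at least one tree, by the vanishing or non-vanishing of all $(r\kappa+1)$-minors of the flattening $\flat_{A'|B'}(P)$, at least whenever both $\#A', \#B' \geq j$.

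First I would enumerate the finitely many splits $A'|B'$ of $[n]$ with $\#A', \#B' \geq j$ and for each apply the edge-invariant test. The strict inequality $r < \kappa^{j-1} \leq \min(\kappa^{\#A'-1}, \kappa^{\#B'-1})$ is exactly what Lemma \ref{lem:edgemixture} requires. Since only finitely many splits are tested, a generic parameter choice lies in the intersection of the generic subsets on which each individual test correctly identifies the compatibility status of its split. Hence from $P$ alone I recover the collection $\mathcal{S}$ of all splits of $[n]$ with both sides of size at least $j$ that are compatible with every tree in $\bfT$.

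Next, since $\bfT$ belongs to the $j$-deep class via some tripartition $A|B|C$, the splits $A|B\cup C$ and $A\cup C|B$ both lie in $\mathcal{S}$ and are nested. I would search pairs $(X_1|Y_1,X_2|Y_2) \in \mathcal{S}\times \mathcal{S}$ for one satisfying $X_1 \subsetneq X_2$, $\#X_1 \geq j$, and $\#Y_2 \geq j$; at least one such pair exists by hypothesis. From any such pair I extract the tripartition $A' \defas X_1$, $B' \defas Y_2$, $C' \defas X_2 \smallsetminus X_1$, for which the splits $A'|B'\cup C'$ and $A'\cup C'|B'$ are common to every tree in $\bfT$. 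Thus $\bfT$ lies in the $j$-deep class via the now-known tripartition $A'|B'|C'$ (possibly different from the original $A|B|C$, which is immaterial).

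Finally, I invoke Theorem \ref{thm:numparam} with this known tripartition. The hypothesis $r \leq \kappa^{j-1}$ is implied by $r < \kappa^{j-1}$, and the side condition for $\kappa = 2$ is automatic: if $r \geq 2$ and $\kappa = 2$, then $2 \leq r < 2^{j-1}$ forces $j \geq 3$ and so $\#A' \geq 3$, while the trivial case $r=1$ is an unmixed single tree handled by Theorem \ref{thm:basictree}. The main technical point to verify is simultaneous genericity of the edge-invariant tests across all split candidates, but this is immediate from the finiteness of the split set and the stability of generic conditions under finite intersections; this is also precisely the reason the bound on $r$ in Theorem \ref{thm:treeparam} is the strict $r < \kappa^{j-1}$ rather than the weaker $r \leq \kappa^{j-1}$ of Theorem \ref{thm:numparam}.
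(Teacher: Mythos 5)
Your proposal is correct and follows essentially the same route as the paper's proof: use the edge invariants of Lemma \ref{lem:edgemixture} (with the strict bound $r < \kappa^{j-1}$ making the test valid for every split with both sides of size at least $j$) to recover common splits from $P$, assemble a known $j$-deep tripartition, and then invoke Theorem \ref{thm:numparam}, with the $\kappa = 2$ side condition dispatched by noting $2 \le r < 2^{j-1}$ forces $j \ge 3$. You merely spell out details the paper compresses into ``simply by testing for all splits of an appropriate size''---the explicit extraction of the tripartition from a nested pair of detected splits, the finite-intersection genericity argument, and the trivial $r = 1$ case---all of which are sound.
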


\begin{proof}
Since the  $\bfT$ is in the $j$-deep class and $r\kappa< \kappa^{\#A},\kappa^{\#B}$, for generic parameters we can use the edge invariants of Lemma \ref{lem:edgemixture} to find two splits $A|B \cup C$ and $B |A \cup C$ compatible with all trees in $\bfT$, with $\#A\geq \#B \geq j$, $\#C\geq 1$, simply by testing for all splits of an appropriate size.  

If $\kappa=2$,  then $2\le r <\kappa^{j-1}$ implies $j\ge 3$, so $\#A\ge 3$. Thus for any $\kappa\ge 2$, Theorem \ref{thm:numparam} applies to give the conclusion.
\end{proof}

We are now in a position to deduce Theorem \ref{thm:intro}, which will follow from Theorem \ref{thm:treeparam} and the following lemma.

\begin{lemma}\label{lem:specialvertex}
Let $T$ be an unrooted binary tree with $n \geq 3$ leaves.  Then there exists an internal vertex $v$ in $T$ inducing a tripartition $A|B|C$ such that two of the three components contain at least $\lceil n/4 \rceil$ leaves of $T$.
\end{lemma}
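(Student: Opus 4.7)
The plan is to locate a ``centroid-like'' internal vertex $v$ at which the largest of the three components of $T \setminus \{v\}$ contains at most $n/2$ leaves, and then to observe that such a vertex automatically has its second-largest component of size at least $\lceil n/4 \rceil$. For each internal vertex $v$ (necessarily of degree $3$ in a binary tree), let $a_v \geq b_v \geq c_v$ denote the numbers of leaves in the three components of $T \setminus \{v\}$, so that $a_v + b_v + c_v = n$. If $a_v \leq n/2$, then $b_v \geq (b_v + c_v)/2 = (n - a_v)/2 \geq n/4$; since $b_v$ is a positive integer, this forces $b_v \geq \lceil n/4 \rceil$, and then $a_v \geq b_v \geq \lceil n/4 \rceil$ delivers the conclusion.

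It remains to produce such a vertex, which I would do by a greedy descent on $a_v$. Start at an arbitrary internal vertex $v_0$. If $a_{v_0} \leq n/2$, stop. Otherwise exactly one component of $T \setminus \{v_0\}$ contains more than $n/2$ leaves; let $w$ be the neighbor of $v_0$ lying inside this component. Since $a_{v_0} > n/2 \geq 3/2$, the large component has at least two leaves, so $w$ cannot itself be a leaf and is hence internal. At $w$, the component of $T \setminus \{w\}$ containing $v_0$ has exactly $n - a_{v_0} < n/2$ leaves, while the other two components of $T \setminus \{w\}$ sit inside the old large component and have leaf counts summing to $a_{v_0}$, each at least $1$ and hence each at most $a_{v_0} - 1$. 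Thus $a_w \leq \max(n - a_{v_0},\, a_{v_0} - 1) < a_{v_0}$, so after finitely many iterations the descent must terminate at an internal vertex with $a_v \leq n/2$.

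The argument is essentially the standard centroid-of-a-tree argument adapted to count leaves rather than nodes, so I do not anticipate a genuine obstacle. The only point needing care is verifying that the descent step cannot inadvertently land on a leaf, which is handled by the inequality $a_{v_0} > n/2 \geq 3/2$ ensuring that whenever a descent step is called for, the large component contains strictly more than one leaf.
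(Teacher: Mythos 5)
Your proof is correct and follows essentially the same route as the paper: locate a ``leaf centroid'' $v$ whose largest component of $T \setminus \{v\}$ has at most $n/2$ leaves, then note by pigeonhole that the second-largest component must then have at least $\lceil n/4 \rceil$ leaves. The only difference is that the paper obtains the centroid by citing Exercise 1.5 of Semple--Steel (asserting without proof that the vertex-count version adapts to leaf counts), whereas you prove the leaf-count version directly via the standard greedy descent, making the argument self-contained---including the correct check that the descent step never lands on a leaf.
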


\begin{proof}
According to Exercise 1.5 in \cite{SemSt}, every tree has a centroid $v$, which is an internal node such that each component of $T \setminus v$ has at most $ |V|/2$ vertices where $V$ is the number of vertices of $T$.  This same statement holds if we replace $V$ with $n$ and vertices with leaves in the definition of the centroid.  Since the tree $T$ is binary  and $v$ is an internal vertex, there are three components of $T \setminus v$.  The largest component has at least $\lceil n/3 \rceil$ leaves and at most  $\lfloor n/2 \rfloor$.  Thus there are at least $\lceil n/2 \rceil$ leaves remaining between the other two components, which implies that in the most balanced case, one of the other two components has at least $\lceil n / 4 \rceil$ leaves.  Since $\lceil n/3 \rceil \geq \lceil n/4 \rceil$ this proves the claim.
\end{proof}

Simple examples show the bound  $\lceil n/4 \rceil$ in this lemma is the best possible.

\begin{proof}[Proof of Theorem \ref{thm:intro}]
According to Lemma \ref{lem:specialvertex}, there is an internal vertex of $T$ inducing a tripartition $A|B|C$ such that $\#A \geq \#B \geq \lceil n/4 \rceil$ and $\#C\ge 1$.  Thus $\bfT = (T, \ldots, T)$ is in the 
$\lceil n/4 \rceil$-deep class.  Theorem \ref{thm:treeparam} then applies.  
\end{proof}

%%%%%%%%%%%%%%%%%%%%%%%%%%%%%%%%%%%%%%%%%%%%%
%%%%%%%%%%%%%%%%%%%%%%%%%%%%%%%%%%%%%%%%%%%%%
%%%%%%%%%%%%%%%%%%%%%%%%%%%%%%%%%%%%%%%%%%%%%
%%%%%%%%%%%%%%%%%%%%%%%%%%%%%%%%%%%%%%%%%%%%%

\section{Further Directions}

The techniques employed in this paper have been primarily concerned with, and are effective for, the identification of parameters in mixture models where the underlying trees share large common substructures.  Establishing identifiability of either numerical or tree parameters in situations where there is no commonality between the trees remains an open problem.

Even in the case of general Markov mixtures of two 4-leaf trees little is understood: First, in the case of two different tree topologies being mixed, it is unknown if the tree parameters are generically identifiable. Second, if the two trees are given, it is unknown if numerical parameters are generically identifiable.  These problems might be addressed by finding stronger versions of the tensor rank results we have employed ({\it e.g.}, a strengthened version of Kruskal's theorem).  But it also seems likely that a solution to these these problems will require the development of new mathematical techniques.

\section*{Acknowledgement}
\label{sec:acknowledgement}
Thanks to John Huelsenbeck for stimulating this work through describing his own investigations with mixture models with many components. 

John Rhodes was partially supported by US National Science Foundation (DMS 0714830).
Seth Sullivant was partially supported by the David and Lucille Packard Foundation and the US National Science Foundation (DMS 0954865).

\bibliographystyle{plain}

\bibliography{tree}

\end{document}